\newtheorem{theorem}{Theorem}
\newtheorem{remark}{Remark}
\newtheorem{Proposition}{Proposition}
\newtheorem{lemma}{Lemma}
\newtheorem{corollary}{Corollary}
\def\blfootnote{\xdef\@thefnmark{}\@footnotetext}
\begin{document}
	
\title{Physical Layer Security over Fluid Antenna Systems: Secrecy Performance Analysis}

\author{Farshad~Rostami~Ghadi, \IEEEmembership{Member}, \textit{IEEE}, Kai-Kit~Wong, \IEEEmembership{Fellow}, \textit{IEEE},\\
F.~Javier~L\'opez-Mart\'inez,~\IEEEmembership{Senior Member},~\textit{IEEE},~Wee Kiat New,~\IEEEmembership{Member}, \textit{IEEE},\\
Hao Xu,~\IEEEmembership{Member}, \textit{IEEE}, and~Chan-Byoung~Chae, \IEEEmembership{Fellow},~\textit{IEEE}
\vspace{-10mm}
}
\maketitle

\begin{abstract}
This paper investigates the performance of physical layer security (PLS) in fluid antenna-aided communication systems under arbitrary correlated fading channels. In particular, it is considered that a single fixed-antenna transmitter aims to send confidential information to a legitimate receiver equipped with a planar fluid antenna system (FAS), while an eavesdropper, also taking advantage of a planar FAS, attempts to decode the desired message. For this scenario, we first present analytical expressions of the equivalent channel distributions at the legitimate user and eavesdropper by using copula, so that the obtained analytical results are valid for any arbitrarily correlated fading distributions. Then, with the help of Gauss-Laguerre quadrature, we derive compact analytical expressions for the average secrecy capacity (ASC), the secrecy outage probability (SOP), and the secrecy energy efficiency (SEE) for the FAS wiretap channel. Moreover, for exemplary purposes, we also obtain the compact expression of ASC, SOP, and SEE by utilizing the Gaussian copula under correlated Rayleigh fading channels as a special case. Eventually, numerical results indicate that applying the fluid antenna with only one activated port to PLS can guarantee more secure and reliable transmission, when compared to traditional antenna systems (TAS) exploiting maximal ratio combining (MRC)  and antenna selection (AS) under selection combining (SC).
\end{abstract}

\begin{IEEEkeywords}
Average secrecy capacity, channel correlation, fluid antenna system, physical layer security, secrecy energy efficiency, secrecy outage probability.
\end{IEEEkeywords}

\blfootnote{The work of F. Rostami Ghadi, W. K. New and K. K. Wong is supported by the Engineering and Physical Sciences Research Council (EPSRC) under Grant EP/W026813/1. The work of H. Xu is supported by the European Union's Horizon 2020 Research and Innovation Programme under Marie Sk$\L$odowska-Curie Grant No. 101024636. The work of F. J. L\'opez-Mart\'inez  is funded in part by Junta de Andaluc\'ia through grant EMERGIA20-00297, and in part by MCIN/AEI/10.13039/501100011033 through grant PID2020-118139RB-I00. The work of C.-B. Chae is supported by the Institute for Information and Communication Technology Promotion (IITP) grant funded by the Ministry of Science (MSIT) and Information and Communication Technology (ICT), Korea (No. 2024-00428780, No. 2021-0-00486). 
}
\blfootnote{\noindent Farshad Rostami Ghadi, Kai-Kit Wong, Wee Kiat New, and Hao Xu are with the Department of Electronic and Electrical Engineering, University College London, WC1E 6BT London, UK. K. K. Wong is also affiliated with Yonsei Frontier Laboratory, Yonsei University, Seoul 03722, Korea. (E-mail:$\{\rm f.rostamighadi,kai\text{-}kit.wong,a.new,hao.xu\}@ucl.ac.uk$).}
\blfootnote{\noindent F.J. L\'opez-Mart\'inez is with the Department of Signal Theory, Networking and Communications, Research Centre for Information and Communication Technologies (CITIC-UGR), University of Granada, 18071, Granada (Spain), and also with the Communications and Signal Processing Lab, Telecommunication Research Institute (TELMA), Universidad de M\'alaga, M\'alaga, 29010, (Spain). (E-mail: $\rm fjlm@ugr.es$).}
\blfootnote{\noindent C.-B. Chae is with the School of Integrated Technology, Yonsei University, Seoul 03722 Korea. (E-mail:$\rm cbchae@yonsei.ac.kr$).}
\blfootnote{Corresponding author: Kai-Kit Wong.}

\vspace{-3mm}
\section{Introduction}\vspace{0mm}\label{sec-intro}
\IEEEPARstart{D}{ue to the} explosive growth in the use of intelligent devices such as smart phones/tablets and consequently the importance of massive connectivity in wireless networks, it is expected that the sixth-generation (6G) mobile technology will provide more efficient and reliable transmission compared with the present fifth-generation (5G) mobile communication system \cite{tariq2020speculative}. The current 5G wireless technology is based on exploiting the multiple-input multiple-output (MIMO) system, which has advanced into multiuser and massive MIMO \cite{zhang2019cell}. Despite the fact that deploying a large number of antennas at base stations can potentially enhance the multiplexing gains and the network capacity, scaling the number of antennas at users leads to several fundamental challenges related to power consumption, implementation complexity, signal processing, channel estimation, spatial separation, and etc.

One promising approach to tackle the aforesaid issues and also achieve the desired goals in the 6G wireless technology is to adopt the fluid antenna system (FAS) as opposed to the traditional antenna system (TAS). FAS refers to any software-manageable fluidic, conductive, or dielectric structure capable of altering its shape and position to reconfigure the radiation characteristics \cite{wong2020fluid}. This is possible due to the recent advances in using flexible conductive substances like liquid metals or ionized solutions \cite{huang2021liquid} and reconfigurable radio-frequency (RF) pixel antennas \cite{song2013efficient}. The latter composes of a massive number of pixels connected by RF switches and by optimizing the RF connections between the pixels gives the effects of changing the position, polarization, frequency and more of radiation in a given space. Implementation and experimental results on FAS have recently been reported in \cite{Shen-tap_submit2024,Zhang-pFAS2024}. It is interesting to note that the pixel-based FAS design in \cite{Zhang-pFAS2024} is particularly appealing as it achieves near-zero-delay position switching for FAS.

In particular, the distinctive attribute of fluid antenna compared with a traditional fixed-position antenna, is the ability to switch positions (also referred to as ports) over a pre-defined space, which proves advantageous in mobile devices due to the physical constraints associated with antenna deployment \cite{wong2022bruce} although FAS can also be as effective at the base station if the channel sees sufficient variations in space. Nevertheless, the fine resolution of the FAS ports means that the channels between the ports can be strongly correlated, thereby making the modeling and analysis of fluid antenna-aided communications much more challenging, see e.g., \cite{wong2022closed,khammassi2023new,ghadi2023copula,ghadi2023gaussian}. 

In designing the 6G mobile communication networks, the unlicensed and shared nature of the wireless spectrum in mobile communication networks causes concerns about reliability and unauthorized access. Hence, it is imperative to address the challenges for secure and ultra-reliable transmission in 6G. Physical layer security (PLS) is a possible approach to enhance the confidentiality and integrity of the transmitted information over wireless channels \cite{ma2024covert}. In the traditional PLS structure, beamforming is the main scheme for boosting communication security, which is achieved by strengthening (weakening) the desired (undesired) signal quality \cite{chen2016survey}. Current beamforming methods are evidently based on a fixed-position antenna array \cite{xiao2023array}. We are thus motivated to consider the potential security advantage of using FAS, which is the goal of this paper.

\subsection{Related Works}
In recent years, many contributions have been made to evaluate the achievable performance of FAS in various communication systems. For instance, to improve the channel model of FAS, \cite{wong2022closed} proposed a simple common correlation parameter to characterize the structure of dependency between the antenna ports. Furthermore, \cite{khammassi2023new} presented an eigenvalue-based model to approximate the spatial correlation given by Jakes' model. Then, \cite{ghadi2023copula} proposed a more general copula-based approach to describe the spatial correlation in FAS, which works under any arbitrary correlated fading distribution. By considering the Gaussian copula, \cite{ghadi2023gaussian} also presented a tight approximation of Jakes' model and revealed that the respective technique can reduce the complexity of analysis and provide mathematical tractability by exploiting rank correlation coefficients.

By employing a single fluid antenna to each mobile user, the authors in \cite{wong2021fluid} proposed the fluid antenna multiple access (FAMA) system, and they derived analytical expressions of the outage probability upper bound and average outage rate lower bound. In order to gain massive connectivity, the authors in \cite{wong2022fast} studied the fast FAMA, where they proposed an approach that can estimate the best port of fluid antenna at every symbol instance without requiring precoding at the base station. However, switching to the best port on a symbol-by-symbol basis by each user is infeasible in practical environments. Motivated by this, the authors in \cite{wong2023slow} proposed slow FAMA, meaning that the fluid antenna of each user updates its best port only if the fading channel changes. Under such an assumption, the authors studied interference immunity by deriving the outage probability upper bound. Moreover, by considering the fluid antenna at a common receiver in the uplink multiuser dirty multiple access channel (MAC), the authors in \cite{ghadi2023fluid} derived the exact closed-form expression of the outage probability under correlated composite fading channels. Their results indicated that the FAS can support a large number of users using only one fluid antenna at the common receiver in a few wavelengths of space. Furthermore, as a special case, the authors in \cite{xu2023outage} analyzed the performance of outage probability for a two-user MAC, in which each user is equipped with the FAS. Besides, by extending the two-user MAC to the multiuser millimeter-wave (mmWave) fluid antenna-assisted scenario, the channel estimation problem and capacity maximization fo FAS were addressed in \cite{xu2023channel,xu2023capacity}. Additionally, the optimization problem of the energy efficiency in FAMA was studied in \cite{chen2023energy,xu2023energy}. Most recently, a new approach called compact ultra massive antenna array (CUMA) utilizing FAS was proposed in \cite{wong2023compact} which is shown to support a large number of users per channel use. Furthermore, by considering a FAS-equipped base station in a multi-user uplink communication system, an optimization problem to minimize the total transmit power of all single-antenna users was formulated and tackled in \cite{Hu2024fluid}.

Additionally, the outage probability analysis for the FAS-aided point-to-point communication system under correlated Nakagami-$m$ fading channels was investigated in \cite{tlebaldiyeva2022enhancing,vega2023simple}. New approximations of diversity gain and outage probability were given in \cite{new2023fluid}. The performance of FAS-assisted backscatter communication systems in terms of the outage probability and delay outage rate was also analyzed in \cite{ghadi2024performance}. To overcome the challenge related to the optimal port selection in FAS, \cite{chai2022port} proposed a machine learning based approach, and their results demonstrated that with only $10\%$ of the ports observed, FAS could perform nearly optimally. Several contributions have recently been made for secure communications utilizing FAS. For example, by assuming that only the legitimate receiver has FAS and other nodes use fixed-position antennas, the average secrecy rate optimization problem was addressed in \cite{tang2023fluid}. In addition, under the scenario in which only the transmitter has movable antennas and the legitimate receiver as well as the eavesdropper consist of a single fixed-position antenna, \cite{cheng2023enabling} investigated the problems of power consumption minimization and secrecy rate maximization. With the same assumption in \cite{cheng2023enabling}, the achievable secrecy rate maximization problem subject to the transmit power budget and the position of all transmit antennas was investigated in \cite{hu2023secure}. The aforesaid studies were mostly based on one-dimensional (1D) FAS. The more general case of two-dimensional (2D) FAS was recently studied in \cite{new2023information} where both sides adopted multiple fluid antennas. It was revealed from the information-theoretic viewpoint that MIMO-FAS outperforms greatly the traditional fixed-position MIMO counterpart in terms of the achievable diversity. ost recently, the integration of FAS with reconfigurable intelligent surface (RIS) was studied in \cite{ghadi2024ris}, where the authors derived analytical expressions for the outage probability and delay outage rate.

It is worth mentioning that a similar notion under the name `movable antenna (MA)' has recently emerged. Conceptually, nevertheless, it is identical to FAS if only position-flexibility is considered, except that MA systems seem to emphasize the use of stepper motors for implementation. A more elaborate commentary between FAS and MA is provided in \cite{zhu2024historical}. There have been interesting results in MA systems. For instance, in \cite{zhu2023modeling}, the authors modeled the general multipath channel for the MA architecture by leveraging the amplitude, phase, and angle of arrival/angle of departure information for each of the multiple channel paths under the far-field condition. The capacity of a MA-enabled point-to-point MIMO system was characterized in \cite{ma2023mimo} by jointly optimizing the positions of the transmit and receive MAs as well as the covariance of transmit signals. Additionally, to reduce the total transmit power and enhance signal strength, the authors in \cite{wang2024movable} formulated a joint optimization problem for MA positions and transmit beamforming in a multiple-input single-output (MISO) interference channel. Furthermore, a MA-enhanced multicast transmission was investigated in \cite{gao2024joint}, where the authors aimed to maximize the minimum weighted signal-to-interference-plus-noise ratio (SINR) among all users by jointly optimizing the position of each transmit/receive MA and the transmit beamforming.

\subsection{Motivations and Contributions}
Motivated by the above discussion, it is important to investigate \textit{(i) How does considering the FAS in the PLS structure improve the reliability and security of wireless communication systems?} and \textit{(ii) how much information can be reliably sent using FAS, so that the eavesdropper cannot decode the confidential message?} In particular, it is useful to understand the benefits of FAS over fixed-position TAS in the PLS setup. Motivated by the above questions, in this paper, we consider the classic wiretap channel, where both the legitimate receiver (Bob) and eavesdropper (Eve) are equipped with a 2D FAS, and our aim is to analyze the secrecy performance.

Following the copula-based approach provided in \cite{ghadi2023gaussian}, we first characterize the distributions of the equivalent channel at the FAS considering the inherent spatial correlation between the fluid antenna ports. Then, we derive compact analytical expressions for the average secrecy capacity (ASC), secrecy outage probability (SOP), and secrecy energy efficiency (SEE) by exploiting the Gauss-Laguerre quadrature technique, where the derived analytical results are valid for any arbitrary choice of correlated fading distribution. Furthermore, we evaluate the system performance in terms of the obtained secrecy metrics under the key benchmark scenario in the literature, i.e., correlated Rayleigh fading. In particular, the main contributions of our work are summarized as follows:
\begin{itemize}
\item We provide general analytical expressions for the joint cumulative distribution function (CDF) and joint probability density function (PDF) of the equivalent channel at Bob and Eve for the FAS-aided secure system.
\item Using the Gauss-Laguerre quadrature approximation, we derive general compact analytical expressions for the ASC, SOP, and SEE for the considered system, which are valid for any choice of arbitrary fading distribution.
\item Additionally, for exemplary purposes, we obtain analytical expressions for the joint CDF and PDF, ASC, SOP, and SEE under correlated Rayleigh fading channels, exploiting the Gaussian copula.
\item We evaluate the security, reliability, and efficiency of the considered FAS-aided secure communications in terms of the ASC, SOP, and SEE. In this regard, we consider the fixed-position single-input multiple-output (SIMO) antenna system that takes advantage of the maximal ratio combining (MRC) technique and antenna selection (AS) under selection combining (SC) scheme as  benchmarks. The numerical results reveal that deploying FAS instead of TAS can significantly enhance the performance of secure transmission in terms of the derived secrecy metrics.
\end{itemize}

\subsection{Paper Organization}
The rest of this paper is organized as follows. Section \ref{sec-sys} describes the system model and presents the secrecy performance analysis of the considered FAS-aided PLS communication system. The statistical characterization is outlined in Section \ref{sub-stat}, while the secrecy metrics under correlated arbitrary and Rayleigh fading are given in Section  \ref{sub-gen} and Section \ref{sub-ray}, respectively. Section \ref{sec-num} illustrates the numerical results, and finally, the conclusions are provided in Section \ref{sec-con}.

\subsection{Notations}
We use boldface upper and lower case letters for matrices and vectors, e.g. $\mathbf{X}$ and $\mathbf{x}$, respectively. $\mathrm{Cov}[\cdot]$ and $\mathrm{Var}[\cdot]$ denote the covariance and variance operators, respectively. Moreover, $(\cdot)^T$, $(\cdot)^{-1}$, $|.|$, and $\textrm{det}(\cdot)$ stand for the transpose, inverse, magnitude, and determinant operators, respectively.

\section{System Model}\label{sec-sys}
We consider a FAS-aided communication setup as shown in Fig.~\ref{fig-system}, where a transmitter (Alice) aims to send a confidential message $x$ with transmit power $P$ to a legitimate receiver (Bob), whereas an eavesdropper (Eve) attempts to decode the information from its received signal. To facilitate the notations, we define subscript $i$ to represent the node associated with Bob and Eve, i.e., $i\in\left\{\mathrm{B}, \mathrm{E}\right\}$. Without loss of generality, we assume that Alice is a single fixed-position antenna transmitter, while Bob and Eve are equipped with a single planar FAS that includes $K_i$ preset positions (i.e., ports), which are distributed evenly over an area of $W_{1,i}\lambda\times W_{2,i}\lambda$ where $\lambda$ defines the wavelength of the carrier frequency. More specifically, a grid structure is considered for the FAS so that $K_{l,i}$ ports are uniformly distributed along a linear space of length $W_{l,i}\lambda$ for $l\in\{1,2\}$, i.e., $K_i=K_{1,i}\times K_{2,i}$ and $W_i=\lambda^2\left(W_{1,i}\times W_{2,i}\right)$. Moreover, we also suppose an applicable mapping function as $\mathcal{F}\left(k_i\right)=\left(k_{1,i},k_{2,i}\right)$ and $\mathcal{F}^{-1}\left(k_{1,i},k_{2,i}\right)=k_i$ to convert the 2D indices to the 1D index, in which $k_i\in\{1,\dots, K_i\}$ and $k_{l,i}\in\{1,\dots, K_{l,i}\}$. Besides, it is assumed that each node $i$ can only switch its FAS to one activated port. Therefore, the received signal at the node $i$ can be expressed as
%
%
%
%
%
%
%
%
%
%
%
%
%
%
\begin{align}
	\mathbf{y}_i=\mathbf{h}_ix+\mathbf{z}_i,
\end{align}
where $\mathbf{z}_i$ denotes the independent identically distributed (i.i.d.) additive white Gaussian noise (AWGN) vector with zero mean and variance $\delta^2_i$ at each port of node $i$. Besides, $\mathbf{h}_{i}$ denotes the complex channel from Alice to the ports at Bob and Eve, which can be modeled as \cite{new2023fluidnew}
\begin{align}
\mathbf{h}_i=&\,\sqrt{\frac{\mathcal{N}}{\mathcal{N}+1}}\mathrm{e}^{j\omega}\mathbf{a}\left(\theta_{0,i},\psi_{0,i}\right)\notag\\
&+\sqrt{\frac{1}{\mathcal{M}\left(\mathcal{N}+1\right)}}\sum_{m=1}^\mathcal{M}\zeta_{m,i}\mathbf{a}\left(\theta_{m,i},\psi_{m,i}\right), \label{eq-h}
\end{align}
in which $\mathcal{N}$ is the Rice factor, $\omega$ denotes the phase of the line-of-sight (LoS) component, $\zeta_{m,i}$ defines the complex channel coefficient of the $m$-th scattered component, and $\mathcal{M}$ is the number of non-LoS paths. Moreover, $\mathbf{a}\left(\theta_{m,i},\psi_{m,i}\right)$ is the receive steering vector, which is defined as \cite{new2023fluidnew}
\begin{align}\notag
&\mathbf{a}\left(\theta_{m,i},\psi_{m,i}\right)\\
&=\left[1\quad\mathrm{e}^{j\frac{2\pi W_{1,i}}{K_{1,i}-1}\sin\theta_{m,i}\cos\psi_{m,i}}\dots~ \mathrm{e}^{j2\pi W_{1,i}\sin\theta_{m,i}\cos\psi_{m,i}}\right]^T\notag\\
&\otimes\left[1\quad\mathrm{e}^{j\frac{2\pi W_{2,i}}{K_{2,i}-1}\sin\theta_{m,i}\cos\psi_{m,i}}\dots~ \mathrm{e}^{j2\pi W_{2,i}\sin\theta_{m,i}\cos\psi_{m,i}}\right],
\end{align}
where $\theta_{m,i}$ and $\psi_{m,i}$ denote the azimuth and elevation angle-of-arrival, respectively, and $\otimes$ is the Kronecker tensor product.

\begin{figure*}[!t]
\centering
\includegraphics[width=.8\linewidth]{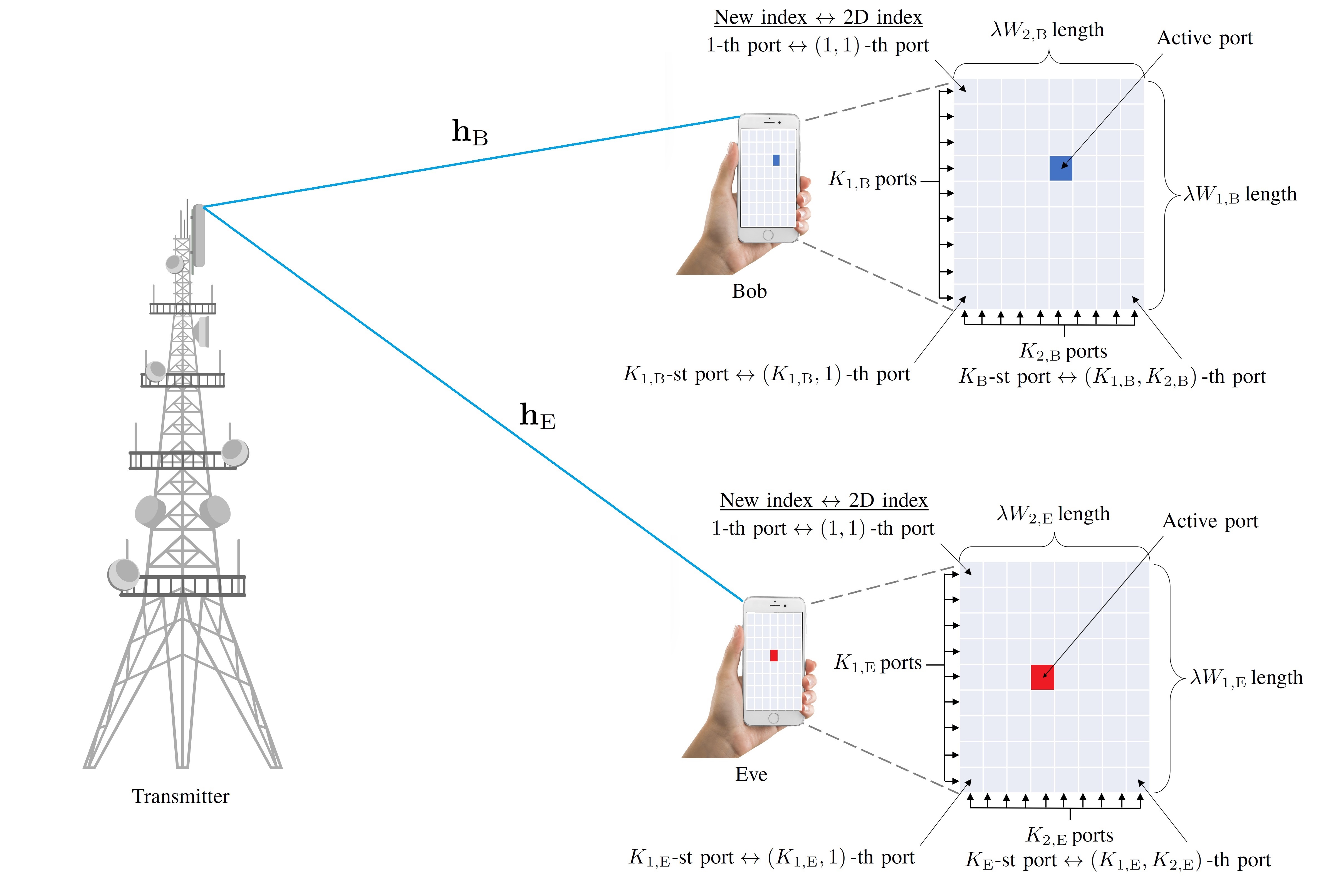}
\caption{The system model of a secure FAS communication system.}\label{fig-system}
\end{figure*}

Note that the fluid antenna ports can be arbitrarily close to each other and hence, the corresponding channels are spatially correlated. Assuming a three-dimensional (3D) isotropic scattering environment, the covariance between two arbitrary ports $k_i=\mathcal{F}^{-1}\left(k_{1,i},k_{2,i}\right)$ and $\tilde{k}_i=\mathcal{F}^{-1}(\tilde{k}_{1,i},\tilde{k}_{2,i})$ for the 2D fluid antenna at node $i$ is given by \cite{new2023information}
\begin{align}\label{eq-cov-gen}
\varrho_{k_i,\tilde{k}_i}\triangleq\mathrm{cov}[k_i,\tilde{k}_i]=\mathcal{J}_0\left(2\pi\Vert k_i-\tilde{k}_i\Vert\right),
\end{align}
in which is $\mathcal{J}_0\left(t\right)=\frac{\sin t}{t}$ defines the zero-order spherical Bessel function of the first kind.

Further, we assume that only the best port that maximizes the received signal-to-noise ratio (SNR) at node $i$ is activated.\footnote{We consider the case that the channel state information (CSI) is available at Eve, and therefore Eve has the ability to select the best port to maximize the SNR. This can be viewed as the worst case as Eve is given the best condition to operate. In addition, given the inherent spatial correlation between fluid antenna ports, acquiring the full CSI requires only a small number of port observations, regardless of the number of fluid antenna ports.} As a result, the received instantaneous SNR at node $i$ for the considered FAS can be expressed as
\begin{align}
\gamma_i=\frac{P\left|\left[\mathbf{h}_i\right]_{k_i^*}\right|^2}{d_i^\nu\delta_i^2}=\bar{\gamma}_i\left|\left[\mathbf{h}_i\right]_{k_i^*}\right|^2,
\label{eq:SNR}
\end{align} 
where $d_i$ denotes the distance between the base station and node $i$, $\nu$ is the path-loss exponent, and $\bar{\gamma}_i=\frac{P}{d_i^\nu\delta^2_i}$ can be interpreted as the average SNR at node $i$ for a $1$-element fluid antenna. For the sake of convenience \cite{jeon2011bounds}, we define a SNR degradation factor $\kappa=\bar\gamma_{\rm B}/\bar\gamma_{\rm E}$ to account for the scenarios on which the legitimate user has an SNR advantage\footnote{Note that such SNR advantage may come from either the difference in path-losses, or a distinct noise power spectral density (for instance, due to dissimilar noise figures at the receiver ends).} over Eve ($\kappa>1$), disadvantage ($\kappa<1$) or balance ($\kappa=1$).
The term $k_{i^*}$ denotes the index of the best port at node $i$, i.e., 
\begin{align}
k_i^*=\arg\underset{k_i}{\max}\left\{\left|\left[\mathbf{h}_i\right]_{k_i}\right|^2\right\}.
\end{align}
Therefore, the channel gain at node $i$ for the considered FAS can be given by
\begin{align}
g_{\mathrm{FAS},i}=\max\left\{g_{1_i},\dots,g_{K_i}\right\},\label{eq-g-fa}
\end{align}
in which $g_{k_i}=\left|h_{k_i}\right|^2$ is the channel gain of port $k$ at node $i$. 

\section{Secrecy Performance Analysis}
Here, we first characterize the distribution of the SNR at Bob and Eve, exploiting the copula approach. Second, we derive analytical expressions for the ASC, SOP, and SEE under arbitrary correlated fading channels with the help of the Gauss-Laguerre quadrature. Afterwards, in order to evaluate the system performance, we provide secrecy metrics under correlated Rayleigh fading distributions for exemplary purposes. 

\subsection{Statistical Characterization}\label{sub-stat}
Given that only the best port at node $i$ with the maximum signal envelope is considered for communication, it is necessary to find the distribution of the channel gain $g_{\mathrm{FA}_i}$ provided in \eqref{eq-g-fa}. For this purpose, we exploit Sklar's theorem, which is based on copula theory. In particular, a $d$-dimension copula $C: \left[0,1\right]^d\rightarrow \left[0,1\right]$ is a joint CDF of $d$ random vectors on the unit cube $\left[0,1\right]^d$ with uniform marginal distributions, i.e., \cite{nelsen2006introduction}
\begin{align}\label{eq-cop-def}
C\left(u_1,\dots, u_d;\vartheta_C\right)=\Pr\left(U_1\leq u_1,\dots, U_d\leq u_d\right),
\end{align}
in which $u_q=F_{S_q}\left(s_q\right)$ and $S_q$ is an arbitrary random variable for $q\in\left\{1,\dots, d\right\}$ and $\vartheta_C$ denotes the copula parameter, which measures the structure of dependency between arbitrary correlated random variables. In particular, the copula parameter $\vartheta_C$ has a relation with the Pearson correlation coefficient and popular rank correlation coefficients such as the Spearman's $\rho$ and the Kendall's $\tau$. Hence, the covariance between two arbitrary ports provided in \eqref{eq-cov-gen} can be expressed in terms of copulas. As such, the spatial correlation between fluid antenna ports can be flexibly considered in SNR's distributions. Moreover, the importance of the copula technique is due to Sklar's theorem, which states that for an arbitrary $d$-dimension CDF $F_{S_1,\dots, S_d}$$\left(s_1,\dots, s_d\right)$ with univariate marginal distributions $F_{S_q}\left(s_q\right)$, there exists a copula function $C$ such that for all $s_q$ in the extended real line domain
$\mathbb{R}$ \cite{nelsen2006introduction}, 
\begin{align}
F_{S_1,\dots, S_d}\left(s_1, \dots, s_d\right)=C\left(F_{S_1}\left(s_1\right),\dots, F_{S_d}\left(s_d\right);\vartheta_C\right). \label{eq-sklar}
\end{align}
\begin{theorem}\label{thm-dist-gen}
The CDF and PDF of $g_{\mathrm{FA},i}$ at node $i$ for the FAS-aided secure communication system under arbitrary correlated fading channel channels are, respectively, given by
\begin{align}
F_{g_{\mathrm{FAS},i}}(r)=C\left(F_{g_{1_i}}\left(r\right),\dots, F_{g_{K_i}}\left(r\right);\vartheta_C\right), \label{eq-cdf-gen}
\end{align}
and
\begin{align}
	&\hspace{-3mm}f_{g_{\mathrm{FAS},i}}(r)=\prod_{\hat{k}_i=1}^{k_i}f_{g_{\hat{k}_i}}\left(r\right) c\left(F_{g_{1_i}}\left(r\right),\dots, F_{g_{K_i}}\left(r\right);\vartheta_C\right),\label{eq-pdf-gen}
\end{align}
in which $F_{g_{1_i}}\left(r\right)$ and $f_{g_{1_i}}\left(r\right)$ are the marginal CDF and PDF of the channel gain $g_{k_i}$ with an arbitrary distribution. Besides, $c\left(.\right)$ is the copula density function of an arbitrary copula $C(.)$ that is given by \cite{nelsen2006introduction}
\begin{align}\notag
&c\left(F_{g_{1_i}}\left(r\right),\dots, F_{g_{K_i}}\left(r\right);\vartheta_C\right)\\
&=\frac{\partial^{K_i}C\left(F_{g_{1_i}}\left(r\right),\dots, F_{g_{K_i}}\left(r\right);\vartheta_C\right)}{\partial F_{g_{1_i}}\left(r\right)\dots \partial F_{g_{K_i}}\left(r\right)}.
\end{align}

\begin{proof}
By extending the CDF definition, we have
\begin{subequations}
\begin{align}
F_{g_{\mathrm{FAS},i}}\left(r\right)&=\Pr\left(\max\left\{g_{1_i},\dots,g_{K_i}\right\}\leq r\right)\\
&=\Pr\left(g_{1_i}\leq r,\dots, g_{K_i}\leq r\right)\\
&=F_{g_{1_i},\dots, g_{K_i}}\left(r,\dots, r\right)\\
&\hspace{-.5mm}\overset{(a)}{=}C\left(F_{g_{1_i}}\left(r\right),\dots, F_{g_{K_i}}\left(r\right);\vartheta_C\right),
\end{align}
\end{subequations}
where $(a)$ is derived by using Sklar's theorem in \eqref{eq-sklar}. Moreover, by applying the chain rule to \eqref{eq-cdf-gen}, the PDF $f_{g_{\mathrm{FAS},i}}\left(r\right)$ can be derived as \eqref{eq-pdf-gen} and the proof is completed. 
\end{proof}
\end{theorem}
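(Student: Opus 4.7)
The plan is to reduce the statement to two essentially mechanical steps: an event-level identity that converts the CDF of the port-wise maximum into a joint CDF on the diagonal, and an invocation of Sklar's theorem (equation (8)) that rewrites the joint CDF through a copula. The PDF then follows by differentiation.

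For the CDF in (9), I would start from the definition $F_{g_{\mathrm{FAS},i}}(r) = \Pr(g_{\mathrm{FAS},i} \leq r)$, substitute the representation (7), and use the elementary equivalence
\begin{align*}
\Bigl\{\max\{g_{1_i},\dots,g_{K_i}\}\leq r\Bigr\} = \bigcap_{k_i=1}^{K_i}\{g_{k_i}\leq r\}
\end{align*}
to rewrite the probability as the joint CDF $F_{g_{1_i},\dots,g_{K_i}}(r,\dots,r)$ evaluated on the diagonal. Applying Sklar's theorem yields the desired copula representation, with $\vartheta_C$ encoding the spatial correlation induced by the Bessel-type covariance in (3). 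No properties of the particular marginal fading law are needed at this stage, which is precisely what makes the representation useful for arbitrary correlated fading.

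For the PDF in (10), I would differentiate the CDF representation (9) with respect to $r$. Because $r$ enters every argument of $C$, the chain rule produces contributions from each of the $K_i$ slots, each pairing a partial derivative of $C$ against a marginal density $f_{g_{k_i}}(r)$; these terms would then be rearranged and identified with the copula density $c$ defined immediately after (10).

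The main obstacle I expect is in this last step: making sure that the chain-rule contributions consolidate correctly against the definition of $c$, which is stated as a $K_i$-fold mixed partial of $C$ rather than a single total derivative with respect to $r$. The CDF step, by contrast, is immediate once Sklar's theorem is recalled and requires no probabilistic machinery beyond the definition of the maximum.
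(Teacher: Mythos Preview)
Your proposal matches the paper's proof essentially line for line: the paper also reduces the CDF to the diagonal joint CDF via the max-event identity, invokes Sklar's theorem for step $(a)$, and then appeals to the chain rule for the PDF without further detail. Your flagged concern about how a single $r$-derivative of $C$ can produce the $K_i$-fold mixed partial $c$ times the product of marginal densities is well placed and is precisely the step the paper leaves unjustified; the honest total derivative yields $\sum_k (\partial_k C)\, f_{g_{k_i}}(r)$, not the expression in \eqref{eq-pdf-gen}, so the PDF formula as written is really the joint density on the diagonal rather than $\frac{d}{dr}F_{g_{\mathrm{FAS},i}}(r)$.
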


\subsection{Secrecy Metrics under Arbitrary Fading Channels}\label{sub-gen}
\subsubsection{ASC Analysis}
The secrecy capacity of the FAS system is defined as the difference between the channel capacities corresponding to legitimate and wiretap links, i.e.,
\begin{align}
\hspace{-2mm}\mathcal{C}_\mathrm{s}\left(\gamma_\mathrm{B},\gamma_\mathrm{E}\right)=\max\left\{\log_2\left(1+\gamma_\mathrm{B}\right)-\log_2\left(1+\gamma_\mathrm{E}\right),0\right\}.\label{eq-def-sc}
\end{align}
Noting that $\mathcal{C}_\mathrm{s}\left(\gamma_\mathrm{B},\gamma_\mathrm{E}\right)$ is a random function of $\gamma_{i}$ as it depends on the channel, the corresponding ASC can be defined as
\begin{align}
	&\bar{\mathcal{C}}_\mathrm{s}\left[\mathrm{bps/Hz}\right]=\mathbb{E}\left[\mathcal{C}\left(\gamma_{{\mathrm{B}}},\gamma_{{\mathrm{E}}}\right)\right]\label{eq-def1-asc}\\ 
	&=\int_{0}^\infty\hspace{-2mm}\int_0^{\gamma_{\mathrm{B}}} \mathcal{C}_\mathrm{s}\left(\gamma_\mathrm{B},\gamma_\mathrm{E}\right)f_{\gamma_\mathrm{B}}\left(\gamma_\mathrm{B}\right)f_{\gamma_\mathrm{E}}\left(\gamma_\mathrm{E}\right)d\gamma_\mathrm{E}d\gamma_\mathrm{B}.\label{eq-def-asc}
\end{align}

\begin{Proposition}\label{pr-asc}
The ASC for the FAS under arbitrary fading channels is given by
\begin{align}\nonumber
	\bar{\mathcal{C}}_\mathrm{s}\hspace{-1mm}\approx&\hspace{-1mm} \sum_{\tilde{n}=1}^{\tilde{N}}\frac{w_{\tilde{n}}\mathrm{e}^{\epsilon_{\tilde{n}}}}{1+\epsilon_{\tilde{n}}}\hspace{-1mm}\left[1-C\left(F_{g_{1_\mathrm{B}}}\left(\frac{\epsilon_{\tilde{n}}}{\bar{\gamma}_\mathrm{B}}\right),\dots, F_{g_{K_\mathrm{B}}}\left(\frac{\epsilon_{\tilde{n}}}{\bar{\gamma}_\mathrm{B}}\right);\vartheta_C\right)\right]\\
	&\times C\left(F_{g_{1_\mathrm{E}}}\left(\frac{\epsilon_{\tilde{n}}}{\bar{\gamma}_\mathrm{E}}\right),\dots, F_{g_{K_\mathrm{E}}}\left(\frac{\epsilon_{\tilde{n}}}{\bar{\gamma}_\mathrm{E}}\right);\vartheta_C\right),\label{eq-asc-gen}
\end{align}
in which 
\begin{align}\label{eq-w} w_{\tilde{n}}=\frac{\epsilon_{\tilde{n}}}{2\left(\tilde{N}+1\right)^2L^2_{\tilde{N}+1}\left(\epsilon_{\tilde{n}}\right)},
\end{align}
$\epsilon_{\tilde{n}}$ is the $\tilde{n}$-th root of Laguerre polynomial $L_{\tilde{N}}\left(\epsilon_{\tilde{n}}\right)$, $\tilde{N}$ defines the parameter to ensure a complexity-accuracy trade-off \cite{abromowitz1972handbook}.
\end{Proposition}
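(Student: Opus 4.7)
The plan is to reduce the double integral in \eqref{eq-def-asc} to a single integral in a form amenable to Gauss-Laguerre quadrature, and then directly apply the distribution derived in Theorem~\ref{thm-dist-gen}. My starting point would be the classical identity for the positive part of a difference of independent nonnegative random variables $X,Y$:
\begin{equation*}
\mathbb{E}[(X-Y)^+]=\int_0^\infty F_Y(u)\bigl[1-F_X(u)\bigr]du,
\end{equation*}
which follows by writing $(x-y)^+=\int_0^\infty \mathbb{1}\{y<u<x\}\,du$ and using Fubini. Setting $X=\log_2(1+\gamma_{\mathrm{B}})$ and $Y=\log_2(1+\gamma_{\mathrm{E}})$ turns \eqref{eq-def-asc} into a one-dimensional integral over $u$, and the change of variable $u=\log_2(1+\gamma)$, for which $du=d\gamma/\bigl[(1+\gamma)\ln 2\bigr]$, recasts it as
\begin{equation*}
\bar{\mathcal{C}}_\mathrm{s}=\frac{1}{\ln 2}\int_0^\infty\frac{F_{\gamma_{\mathrm{E}}}(\gamma)\bigl[1-F_{\gamma_{\mathrm{B}}}(\gamma)\bigr]}{1+\gamma}\,d\gamma.
\end{equation*}

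Next, I would bring in the statistical characterization. Since $\gamma_i=\bar{\gamma}_i g_{\mathrm{FAS},i}$, a simple scaling gives $F_{\gamma_i}(\gamma)=F_{g_{\mathrm{FAS},i}}(\gamma/\bar{\gamma}_i)$, and substituting the copula expression \eqref{eq-cdf-gen} from Theorem~\ref{thm-dist-gen} yields an integrand written entirely in terms of the marginal port CDFs $F_{g_{k_i}}$ and the chosen copula $C(\cdot;\vartheta_C)$. This already produces an exact integral representation of the ASC, valid for any arbitrary correlated fading.

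Finally, I would apply Gauss-Laguerre quadrature to remove the remaining integral. The standard rule $\int_0^\infty e^{-\epsilon}\phi(\epsilon)\,d\epsilon\approx\sum_{\tilde{n}=1}^{\tilde{N}}w_{\tilde n}\phi(\epsilon_{\tilde n})$ with nodes $\epsilon_{\tilde n}$ given by the roots of $L_{\tilde N}$ and weights \eqref{eq-w} is applied by the trivial trick
\begin{equation*}
\int_0^\infty\frac{g(\gamma)}{1+\gamma}d\gamma=\int_0^\infty e^{-\gamma}\frac{e^{\gamma}\,g(\gamma)}{1+\gamma}\,d\gamma,
\end{equation*}
so that $\phi(\epsilon)=e^{\epsilon}g(\epsilon)/(1+\epsilon)$ with $g(\epsilon)$ the product of the copula CDF of Eve and the complementary copula CDF of Bob evaluated at $\epsilon/\bar{\gamma}_i$. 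Collecting the factor $e^{\epsilon_{\tilde n}}/(1+\epsilon_{\tilde n})$ together with $w_{\tilde n}$ recovers exactly \eqref{eq-asc-gen}.

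I expect the only subtle step to be the first one: justifying the reduction to the single integral identity. Independence of $\gamma_{\mathrm{B}}$ and $\gamma_{\mathrm{E}}$ (which holds because Bob and Eve observe separate FASs driven by independent scattering, despite the strong intra-array correlation captured by the copula at each side) is essential here, and is worth spelling out explicitly. The change of variables and the Gauss-Laguerre step are essentially mechanical, with accuracy controlled by $\tilde N$ as noted in \cite{abromowitz1972handbook}.
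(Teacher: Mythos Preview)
Your proposal is correct and follows essentially the same route as the paper: reduce \eqref{eq-def-asc} to the single integral $\bar{\mathcal{C}}_\mathrm{s}=\frac{1}{\ln 2}\int_0^\infty \frac{F_{\gamma_{\mathrm{E}}}(\gamma)[1-F_{\gamma_{\mathrm{B}}}(\gamma)]}{1+\gamma}\,d\gamma$, substitute the copula CDF from Theorem~\ref{thm-dist-gen}, and finish with Gauss-Laguerre via the $e^{-\gamma}e^{\gamma}$ trick. The only cosmetic difference is that the paper obtains the single-integral identity by explicitly expanding the double integral and integrating by parts (splitting into $\mathcal{I}_1-\mathcal{I}_2$), whereas you reach it more directly through the $\mathbb{E}[(X-Y)^+]$ identity and the change of variable $u=\log_2(1+\gamma)$.
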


\begin{proof}
See Appendix \ref{app-asc}.
\end{proof}

\subsubsection{SOP Analysis}
SOP is defined as the probability that the instantaneous secrecy capacity $\mathcal{C}_\mathrm{s}$ is less than a target secrecy rate $R_\mathrm{s}\ge0$, i.e.,
\begin{align}
	P_\mathrm{sop}=\Pr\left(\mathcal{C}_\mathrm{s}\leq R_\mathrm{s}\right).\label{eq-def-sop}
\end{align}

\begin{Proposition}\label{pro-sop}
The SOP for the FAS under arbitrary fading channels is given by  
\begin{align}\notag
	&P_{\mathrm{sop}}\approx \sum_{\tilde{n}=1}^{\tilde{N}}\frac{w_{\tilde{n}}\mathrm{e}^{\epsilon_{\tilde{n}}}}{1+\epsilon_{\tilde{n}}} \prod_{k_\mathrm{E}=1}^{K_\mathrm{E}} \frac{f_{g_{k_\mathrm{E}}}\left(\frac{\epsilon_{\tilde{n}}}{\bar{\gamma}_\mathrm{E}}\right)}{\bar{\gamma}_\mathrm{E}^{k_\mathrm{E}}}\\ \notag
	&\times c\left(F_{g_{1_\mathrm{E}}}\left(\frac{\epsilon_{\tilde{n}}}{\bar{\gamma}_\mathrm{E}}\right),\dots, F_{g_{K_\mathrm{E}}}\left(\frac{\epsilon_{\tilde{n}}}{\bar{\gamma}_\mathrm{E}}\right);\vartheta_C\right)
	\\
	&\times C\left(F_{g_{1_\mathrm{B}}}\left(\frac{R_\mathrm{o}\epsilon_{\tilde{n}}+R_\mathrm{t}}{\bar{\gamma}_\mathrm{B}}\right),\dots, F_{g_{K_\mathrm{B}}}\left(\frac{R_\mathrm{o}\epsilon_{\tilde{n}}+R_\mathrm{t}}{\bar{\gamma}_\mathrm{B}}\right);\vartheta_C\right), \label{eq-sop-gen}
\end{align}
where  $R_\mathrm{o}=2^{R_{\mathrm{s}}}$, $R_\mathrm{t}=R_\mathrm{o}-1$, and $w_{\tilde{n}}$ is defined in \eqref{eq-w}.
\end{Proposition}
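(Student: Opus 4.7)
The plan is to follow the standard three-step template for SOP under independent legitimate and wiretap channels: reduce the secrecy event to a scalar inequality between $\gamma_{\mathrm{B}}$ and $\gamma_{\mathrm{E}}$, marginalize over $\gamma_{\mathrm{E}}$ to obtain a one-dimensional integral, and then invoke the Gauss-Laguerre quadrature on that integral.

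First, starting from \eqref{eq-def-sop} and \eqref{eq-def-sc}, I would note that for any $R_{\mathrm{s}} \ge 0$ the event $\mathcal{C}_{\mathrm{s}} \le R_{\mathrm{s}}$ is equivalent to $\log_2\bigl((1+\gamma_{\mathrm{B}})/(1+\gamma_{\mathrm{E}})\bigr) \le R_{\mathrm{s}}$, i.e., $\gamma_{\mathrm{B}} \le R_{\mathrm{o}}\gamma_{\mathrm{E}} + R_{\mathrm{t}}$ with $R_{\mathrm{o}} = 2^{R_{\mathrm{s}}}$ and $R_{\mathrm{t}} = R_{\mathrm{o}} - 1$. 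The $\max\{\cdot,0\}$ truncation is automatically respected because whenever $\log_2(1+\gamma_{\mathrm{B}}) < \log_2(1+\gamma_{\mathrm{E}})$ one already has $\gamma_{\mathrm{B}} \le \gamma_{\mathrm{E}} \le R_{\mathrm{o}}\gamma_{\mathrm{E}} + R_{\mathrm{t}}$. Since the Alice-Bob and Alice-Eve channels are statistically independent, the SOP collapses to the single integral
\begin{equation*}
P_{\mathrm{sop}} = \int_0^\infty F_{\gamma_{\mathrm{B}}}(R_{\mathrm{o}}\gamma_{\mathrm{E}} + R_{\mathrm{t}})\, f_{\gamma_{\mathrm{E}}}(\gamma_{\mathrm{E}})\, d\gamma_{\mathrm{E}}.
\end{equation*}

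Second, using the linear scaling $\gamma_i = \bar{\gamma}_i g_{\mathrm{FAS},i}$ from \eqref{eq:SNR}, I would rewrite $F_{\gamma_{\mathrm{B}}}(y) = F_{g_{\mathrm{FAS},\mathrm{B}}}(y/\bar{\gamma}_{\mathrm{B}})$ and $f_{\gamma_{\mathrm{E}}}(y) = \bar{\gamma}_{\mathrm{E}}^{-1} f_{g_{\mathrm{FAS},\mathrm{E}}}(y/\bar{\gamma}_{\mathrm{E}})$, and then substitute the copula-based expressions \eqref{eq-cdf-gen}-\eqref{eq-pdf-gen} from Theorem~\ref{thm-dist-gen}. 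This turns the integrand into an explicit function of $\gamma_{\mathrm{E}}$ built from the port marginals $F_{g_{k_i}}, f_{g_{k_i}}$, the copula $C(\cdot;\vartheta_C)$ evaluated at $(R_{\mathrm{o}}\gamma_{\mathrm{E}}+R_{\mathrm{t}})/\bar{\gamma}_{\mathrm{B}}$ on the Bob side and at $\gamma_{\mathrm{E}}/\bar{\gamma}_{\mathrm{E}}$ on the Eve side, and the copula density $c(\cdot;\vartheta_C)$.

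Third, since no closed form exists for arbitrary marginals and copula, I would discretize the integral by Gauss-Laguerre quadrature via $\int_0^\infty h(x)\,dx = \int_0^\infty e^{-x}\bigl[e^{x}h(x)\bigr]dx \approx \sum_{\tilde{n}=1}^{\tilde{N}} w_{\tilde{n}} e^{\epsilon_{\tilde{n}}} h(\epsilon_{\tilde{n}})$, with $\epsilon_{\tilde{n}}$ the zeros of $L_{\tilde{N}}$ and $w_{\tilde{n}}$ as in \eqref{eq-w}. Evaluating the integrand at $\gamma_{\mathrm{E}} = \epsilon_{\tilde{n}}$ directly produces the Bob-side arguments $F_{g_{1_{\mathrm{B}}}}((R_{\mathrm{o}}\epsilon_{\tilde{n}}+R_{\mathrm{t}})/\bar{\gamma}_{\mathrm{B}})$ and the Eve-side arguments $F_{g_{1_{\mathrm{E}}}}(\epsilon_{\tilde{n}}/\bar{\gamma}_{\mathrm{E}})$, $f_{g_{k_{\mathrm{E}}}}(\epsilon_{\tilde{n}}/\bar{\gamma}_{\mathrm{E}})$ that appear in \eqref{eq-sop-gen}.

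The conceptual content is light; the bulk of the effort is clerical. The main obstacle is carefully tracking the $\bar{\gamma}_{\mathrm{E}}^{-1}$ Jacobian alongside the $K_{\mathrm{E}}$-fold product of port PDFs and the copula density, and reconciling the prefactor $e^{\epsilon_{\tilde{n}}}/(1+\epsilon_{\tilde{n}})$ in \eqref{eq-sop-gen} with the quadrature weight convention of \eqref{eq-w}, mirroring the analogous bookkeeping carried out in the ASC derivation of Proposition~\ref{pr-asc}.
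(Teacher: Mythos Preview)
Your approach is correct and matches the paper's proof in Appendix~\ref{app-sop} step for step: reduce $\mathcal{C}_{\mathrm{s}}\le R_{\mathrm{s}}$ to $\gamma_{\mathrm{B}}\le R_{\mathrm{o}}\gamma_{\mathrm{E}}+R_{\mathrm{t}}$, write the SOP as $\int_0^\infty F_{\gamma_{\mathrm{B}}}(R_{\mathrm{o}}\gamma_{\mathrm{E}}+R_{\mathrm{t}})f_{\gamma_{\mathrm{E}}}(\gamma_{\mathrm{E}})\,d\gamma_{\mathrm{E}}$, insert the copula CDF/PDF from Theorem~\ref{thm-dist-gen} via the scaling $\gamma_i=\bar{\gamma}_i g_{\mathrm{FAS},i}$, and discretize with Gauss-Laguerre. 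On the bookkeeping obstacle you flagged, note that the $(1+\epsilon_{\tilde{n}})^{-1}$ factor in \eqref{eq-sop-gen} does not arise from the SOP integrand---unlike the ASC case where it comes from $\tfrac{d}{d\gamma}\ln(1+\gamma)$---and the paper's own derivation in Appendix~\ref{app-sop} does not generate it either; it appears to be a typographical carry-over from \eqref{eq-asc-gen}, so your straightforward application of Lemma~\ref{lemma1} is the intended step.
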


\begin{proof}
See Appendix \ref{app-sop}.
\end{proof}

\subsubsection{SEE Analysis} SEE stands as a momentous secrecy metric for evaluating the efficacy of wireless communication systems within energy consumption, which is denoted as the ratio of ASC to the total power consumption. In fact, SEE is a performance metric that indicates how a communication system with security considerations efficiently exploits energy to guarantee a secure transmission \cite{ghadi2023newperformance}. In general, although increasing the transmit power can provide a higher throughput at the receiver nodes, it can also increase energy consumption and interference. Hence, to capture such a trade-off, the SEE for the considered FAS can be defined as
\begin{align}
\mathcal{E}_\mathrm{s}\left[\mathrm{bps/Hz/J}\right]=\frac{\bar{\mathcal{C}_\mathrm{s}}}{P_\mathrm{tot}}=\frac{\bar{\mathcal{C}_\mathrm{s}}}{P/\alpha+P_\mathrm{c}+\breve{K}P_\mathrm{act}},\label{eq-def-see}
\end{align} 
in which $\alpha$ is the drain efficiency of the high-power amplifier (HPA), $P_\mathrm{c}$ denotes the constant circuit power at Bob, $\breve{K}$ defines the number of activated ports, and $P_\mathrm{act}$ represents the required power for activating a given port.\footnote{It is reasonable to recognize that the power consumption in the FAS is less than that in the MRC SIMO system. This is because in the FAS, only one port with the maximum SNR is activated whereas in the MRC SIMO system, all antennas are activated, which consumes more power.}
 
\begin{Proposition}
The SEE for the FAS under arbitrary fading channels is given by  
\begin{align}\nonumber
\mathcal{E}_\mathrm{s}\hspace{-1mm}=&\hspace{-1mm} \sum_{\tilde{n}=1}^{\tilde{N}}\frac{w_{\tilde{n}}\mathrm{e}^{\epsilon_{\tilde{n}}}}{1+\epsilon_{\tilde{n}}}\hspace{-1mm}\left[1-C\left(F_{g_{1_\mathrm{B}}}\left(\frac{\epsilon_{\tilde{n}}}{\bar{\gamma}_\mathrm{B}}\right),\dots, F_{g_{K_\mathrm{B}}}\left(\frac{\epsilon_{\tilde{n}}}{\bar{\gamma}_\mathrm{B}}\right);\vartheta_C\right)\right]\\
		&\times \frac{C\left(F_{g_{1_\mathrm{E}}}\left(\frac{\epsilon_{\tilde{n}}}{\bar{\gamma}_\mathrm{E}}\right),\dots, F_{g_{K_\mathrm{E}}}\left(\frac{\epsilon_{\tilde{n}}}{\bar{\gamma}_\mathrm{E}}\right);\vartheta_C\right)}{P/\alpha+P_\mathrm{c}+P_\mathrm{act}},\label{eq-see-gen}
\end{align}
where $w_{\tilde{n}}$ is defined in \eqref{eq-w}.
\end{Proposition}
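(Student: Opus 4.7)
The plan is to recognize that this proposition is essentially a direct corollary of Proposition~\ref{pr-asc} combined with the definition of SEE in \eqref{eq-def-see}. First I would note that the denominator $P_\mathrm{tot}=P/\alpha+P_\mathrm{c}+\breve{K}P_\mathrm{act}$ is a deterministic quantity, independent of any channel realization. Since the FAS activates only a single port at any time (as stated in Section~\ref{sec-sys}, right after \eqref{eq:SNR}), we have $\breve{K}=1$, so $P_\mathrm{tot}=P/\alpha+P_\mathrm{c}+P_\mathrm{act}$.

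Next, I would apply Proposition~\ref{pr-asc} to obtain the compact Gauss--Laguerre expression for $\bar{\mathcal{C}}_\mathrm{s}$ under arbitrary correlated fading given in \eqref{eq-asc-gen}, and form the ratio $\bar{\mathcal{C}}_\mathrm{s}/P_\mathrm{tot}$. Because $P_\mathrm{tot}$ is constant with respect to the summation index $\tilde{n}$, it can either be pulled outside the sum or, equivalently, absorbed into each summand. The latter is the form adopted in \eqref{eq-see-gen}, and this substitution yields the stated expression immediately.

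No substantive obstacle is expected, since the argument reduces to a routine substitution. The only items requiring minor care are: (i) correctly identifying $\breve{K}=1$ from the single-activated-port assumption in the system model, so that the denominator simplifies as written; and (ii) verifying that $P_\mathrm{tot}$ commutes with the expectation $\mathbb{E}[\mathcal{C}_\mathrm{s}(\gamma_\mathrm{B},\gamma_\mathrm{E})]$ in \eqref{eq-def1-asc}, which is immediate since $P_\mathrm{tot}$ is deterministic. Consequently, the Gauss--Laguerre weights $w_{\tilde{n}}$ defined in \eqref{eq-w}, the nodes $\epsilon_{\tilde{n}}$, and the copula-based factors $C(\cdot;\vartheta_C)$ transfer unchanged from Proposition~\ref{pr-asc} to \eqref{eq-see-gen}, completing the proof.
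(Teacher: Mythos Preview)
Your proposal is correct and follows essentially the same approach as the paper, which simply states that the proof is done by substituting \eqref{eq-asc-gen} into \eqref{eq-def-see}. Your additional remarks on $\breve{K}=1$ and the determinism of $P_\mathrm{tot}$ just make explicit what the paper leaves implicit.
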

\begin{proof}
The proof is done by substituting \eqref{eq-asc-gen} into \eqref{eq-def-see}.
\end{proof}

\begin{remark}
The derived analytical secrecy metrics in \eqref{eq-asc-gen}, \eqref{eq-sop-gen}, and \eqref{eq-see-gen} are valid for any arbitrary choice of fading distributions, which can also describe unknown dependence structures. Hence, by only considering the marginal distribution of respective fading channels, the ASC, SOP, and SEE for the FAS can be directly obtained. 
\end{remark}

\begin{remark}
The parameter $\vartheta_C$ is defined in terms of the fluid antenna size $W_i$ and the number of fluid antenna ports $K_i$, which can measure the spatial correlation between fluid antenna ports based on \eqref{eq-cov-gen}. Note that $\vartheta_C$ generally depends on the choice of copula, and it can be expressed in terms of the rank correlation coefficients.
\end{remark}

\begin{remark}
In general, any function that has the following properties can be considered as a copula $C$ for the obtained distributions in \eqref{eq-cdf-gen} and \eqref{eq-pdf-gen}, and the derived secrecy metrics in \eqref{eq-asc-gen}, \eqref{eq-sop-gen}, \eqref{eq-see-gen}:
1) For every $q$-dimensional vector $\mathbf{u}$ in $[0,1]^d$, $C(\mathbf{u})=0$ if at least one coordinate of $\mathbf{u}$ is $0$, and if all coordinates of $\mathbf{u}$ are $1$ except $u_{q'}$, then $C(\mathbf{u})=u_{q'}$; 2) For every $\mathbf{a}$ and $\mathbf{b}$ in $[0,1]^q$ such that $\mathbf{a}\leq\mathbf{b}$, we have $V_C\left([\mathbf{a},\mathbf{b}]\right)\ge 0$, where $V_C(B)$ is the C-volume of $B$ \cite{nelsen2006introduction}. However, since each $C$ has its own mathematical and statistical properties, choosing an appropriate $C$ should be based on the purpose of the study. As shown in \cite{ghadi2023gaussian}, the Gaussian copula can be considered as a tight approximation for the spatial correlation in the FAS.
\end{remark}

\subsection{Secrecy Metrics under Rayleigh Fading Channels}\label{sub-ray}
We now specialize our general framework to a particular case of relevance, which corresponds to the case of correlated Rayleigh fading. In this case, the marginal PDF and CDF with the scale parameter $\eta_i$ are respectively given by
\begin{align}
F_{g_{k_i}}\left(r\right)&=1-\mathrm{e}^{-\eta_ir},\label{eq-marg-p}\\
f_{g_{k_i}}\left(r\right)&=\eta_i\mathrm{e}^{-\eta_ir}. \label{eq-marg-c}
\end{align}
This corresponds to the case that $\mathcal{N}=0$ and $\mathcal{M}\rightarrow\infty$ in \eqref{eq-h}. Consequently, the amplitude of each entry of $\mathbf{h}_i$ follows the Rayleigh distribution. Under such an assumption, the corresponding covariance between two arbitrary ports $k_i$ and $\tilde{k}_i$ in \eqref{eq-cov-gen} can be written as \cite{new2023information}
\begin{multline}
\varrho_{k_i,\tilde{k}_i}=\Omega_i\times\\
\mathcal{J}_0\left(\hspace{-1mm}2\pi\sqrt{\left(\frac{k_{1,i}-\tilde{k}_{1,i}}{K_{1,i}-1}W_{1,i}\right)^2\hspace{-3mm}+\left(\frac{k_{2,i}-\tilde{k}_{2,i}}{K_{2,i}-1}W_{2,i}\right)^2}\right)\hspace{-1mm}, \label{eq-rho}
\end{multline}
in which $\Omega_i$ denotes the variance of the entry of $\mathbf{h}_i$. Next, by considering a specific copula function $C(.)$ and the corresponding copula density function $c(.)$, the CDF and PDF of $g_{\mathrm{FAS},i}$ can be derived under correlated Rayleigh fading channels. For this purpose, it is necessary to select some copula that can accurately describe the respective spatial correlation. Finding an appropriate copula for high-dimensional random variables is often performed based on empirical statistic data analysis and desired properties of the copula model. When this alternative is not possible due to the lack of such empirical data, there are many popular parametric copulas such as Archimedean families (e.g., Clayton, Gumbel, Frank, and Farlie-Gumbel-Morgenstern (FGM)) and the elliptical copula (including the Gaussian and Student-$t$ copulas) that can be exploited for the performance analysis of various wireless communication systems \cite{ghadi2020copula,gholizadeh2015capacity,ghadi2022capacity,zheng2019copula,ghadi2020copula1,ghadi2022performance,choi2015copula,ghadi2022impact,ghadi2024newperformance}. Since the number of ports in the 2D FAS can be very large, we exploit the Gaussian copula for performance analysis due to its unique properties in capturing both linear and elliptical dependencies in high-dimensional settings, providing mathematical tractability, and including a correlation matrix. The $d$-dimension Gaussian copula with correlation matrix $\mathbf{R}\in\left[-1,1\right]^{d\times d}$ is defined as \cite{ghadi2023gaussian}
\begin{align}
\hspace{-2mm}	C_\mathrm{G}\left(u_1,\dots,u_d;\vartheta_\mathrm{G}\right)=\Phi_\mathbf{R}\left(\varphi^{-1}(u_1),\dots,\varphi^{-1}(u_d);\vartheta_\mathrm{G}\right)\hspace{-1mm},\hspace{-2mm} \label{eq-gc-def1}
\end{align}
in which $\varphi^{-1}(u_q)=\sqrt{2}\mathrm{erf}^{-1}\left(2u_q-1\right)$ denotes the inverse CDF (i.e., quantile function) of the standard normal distribution, where $\mathrm{erf}^{-1}\left(\cdot\right)$ is the inverse of the error function $\mathrm{erf}\left(z\right)=\frac{2}{\sqrt{\pi}}\int_0^z\mathrm{e}^{-t^2}dt$. The term $\Phi_\mathbf{R}(\cdot)$ is the joint CDF of the multivariate normal distribution with zero mean vector and correlation matrix $\mathbf{R}$ and also $\vartheta_\mathrm{G}$ defines the dependence parameter of the Gaussian copula, which controls the degree of dependence between correlated random variables.  Furthermore, the Gaussian copula density function is written as
\begin{align}
c_\mathrm{G}\left(u_1,\dots,u_d;\vartheta_\mathrm{G}\right)=\frac{\exp\left(-\frac{1}{2}\left(\boldsymbol{\varphi}^{-1}\right)^T\left(\mathbf{R}^{-1}-\mathbf{I}\right)\boldsymbol{\varphi}^{-1}\right)}{\sqrt{{\rm det}\left(\mathbf{R}\right)}}, \label{eq-gc-def2}
\end{align}
where $\boldsymbol{\varphi}^{-1}$ is a vector that includes the quantile function of the standard normal distribution $\varphi^{-1}\left(u_q\right)$, $\mathrm{det}\left(\mathbf{R}\right)$ is the determinant of $\mathbf{R}$, and $\mathbf{I}$ is the identity matrix. The Gaussian copula $C_\mathrm{G}$ is constructed using the inversion method based on Sklar's theorem and the Gaussian copula density function $c_\mathrm{G}$ is derived by using the chain rule. The detailed construction process is provided in Appendix \ref{app-g}.

Therefore, by using the Gaussian copula, the corresponding correlation matrix $\mathbf{R}_{k_i,\tilde{k}_i}$ of FAS can be expressed in terms of the covariance matrix with the help of Cholesky decomposition \cite{ghadi2023gaussian}. Hence, the correlation between two arbitrary ports $k_i$ and $\tilde{k}_i$ (i.e., each entry of $\mathbf{R}_{k_i,\tilde{k}_i}$) is given by
\begin{align}
\vartheta_{G_{k_i,\tilde{k}_i}}\triangleq \mathrm{corr}[k_i,\tilde{k}_i]=\frac{\mathrm{cov}[k_i,\tilde{k}_i]}{\sqrt{\mathrm{var}\left[k_i\right]\mathrm{var}[\tilde{k}_i]}}=\frac{\varrho_{k_i,\tilde{k}_i}}{\Omega_i}. \label{eq-g-par}
\end{align}
\begin{corollary}
The CDF and PDF of $g_{\mathrm{FAS},i}$ at node $i$ for the FAS under correlated Rayleigh fading channels are given by \eqref{eq-cdf-ray} and \eqref{eq-pdf-ray} (see top of next page), in which $\vartheta_{G_{k_i,\tilde{k}_i}}$ is defined in \eqref{eq-g-par} and 
\begin{figure*}[t]
\begin{align}
F_{g_{\mathrm{FAS},i}}(r)&= \Phi_{\mathbf{R}_{k_i,\tilde{k}_i}}\left(\sqrt{2}\mathrm{erf}^{-1}\left(1-2\mathrm{e}^{-\eta_ir}\right),\dots,\sqrt{2}\mathrm{erf}^{-1}\left(1-2\mathrm{e}^{-\eta_ir}\right);\vartheta_{G_{k_i,\tilde{k}_i}}\right),\label{eq-cdf-ray}\\
f_{g_{\mathrm{FAS},i}}(r)&={\eta^{K_i}_i}\mathrm{e}^{-K_i\eta_ir} \frac{\exp\left(-\frac{1}{2}\left(\boldsymbol{\varphi}_{g_{k_i}}^{-1}\right)^T\left(\mathbf{R}_{k_i,\tilde{k}_i}^{-1}-\mathbf{I}\right)\boldsymbol{\varphi}_{g_{k_i}}^{-1}\right)}{\sqrt{{\rm det}\left(\mathbf{R}_{k_i,\tilde{k}_i}\right)}},\label{eq-pdf-ray}
\end{align}
\hrulefill
\end{figure*}
\begin{align}
\boldsymbol{\varphi}^{-1}_{g_{k_i}}=\left[\sqrt{2}\mathrm{erf}^{-1}\left(1-2\mathrm{e}^{-\eta_ir}\right),\dots,\sqrt{2}\mathrm{erf}^{-1}\left(1-2\mathrm{e}^{-\eta_ir}\right)\right]^T\hspace{-3mm}.
\end{align}

\begin{proof}
By plugging the marginal distributions from \eqref{eq-marg-p} and \eqref{eq-marg-c} into \eqref{eq-gc-def1} and \eqref{eq-gc-def2}, and then using \eqref{eq-g-par}, the proof is done.
\end{proof}
\end{corollary}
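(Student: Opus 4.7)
The plan is to derive the result as a direct substitution into Theorem~\ref{thm-dist-gen}, once the copula $C$ is specialized to the Gaussian copula and the marginal distributions are specialized to Rayleigh. Concretely, \eqref{eq-cdf-gen} and \eqref{eq-pdf-gen} already express $F_{g_{\mathrm{FAS},i}}$ and $f_{g_{\mathrm{FAS},i}}$ as functions of (i) the marginal CDFs and PDFs of the per-port gains and (ii) an arbitrary copula $C$ and its density $c$. Therefore no new probabilistic step is needed; the work is purely mechanical instantiation.

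First I would insert the Rayleigh marginals \eqref{eq-marg-p}--\eqref{eq-marg-c} into the quantile transform of the Gaussian copula \eqref{eq-gc-def1}. Since $\varphi^{-1}(u)=\sqrt{2}\,\mathrm{erf}^{-1}(2u-1)$ and $F_{g_{k_i}}(r)=1-\mathrm{e}^{-\eta_i r}$, one obtains $\varphi^{-1}\!\bigl(F_{g_{k_i}}(r)\bigr)=\sqrt{2}\,\mathrm{erf}^{-1}\!\bigl(1-2\mathrm{e}^{-\eta_i r}\bigr)$. Because all $K_i$ ports share the same marginal, every entry of the argument vector collapses to this common scalar, producing \eqref{eq-cdf-ray}. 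Next I would handle the PDF by evaluating the product in \eqref{eq-pdf-gen} with $f_{g_{k_i}}(r)=\eta_i\mathrm{e}^{-\eta_i r}$, which yields the prefactor $\eta_i^{K_i}\mathrm{e}^{-K_i\eta_i r}$, and then multiplying by the Gaussian copula density \eqref{eq-gc-def2} evaluated at the quantile vector $\boldsymbol{\varphi}^{-1}_{g_{k_i}}$ whose entries again all coincide owing to marginal homogeneity; this gives exactly \eqref{eq-pdf-ray}.

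The final step is to pin down the correlation matrix $\mathbf{R}_{k_i,\tilde{k}_i}$ that parametrises $\Phi_{\mathbf{R}_{k_i,\tilde{k}_i}}$ and appears inside the density. For this I would appeal to \eqref{eq-g-par}, which expresses each off-diagonal entry as $\varrho_{k_i,\tilde{k}_i}/\Omega_i$ using the covariance \eqref{eq-cov-gen} induced by the 3D isotropic scattering model, and invoke Cholesky decomposition as in \cite{ghadi2023gaussian} to guarantee that $\mathbf{R}_{k_i,\tilde{k}_i}$ is a valid (positive semi-definite) correlation matrix so that $\mathbf{R}_{k_i,\tilde{k}_i}^{-1}$ and $\mathrm{det}(\mathbf{R}_{k_i,\tilde{k}_i})$ in \eqref{eq-pdf-ray} are well defined. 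The only real subtlety—and the place I would be most careful—is the distinction between the correlation among the complex channel coefficients captured by $\varrho_{k_i,\tilde{k}_i}$ and the copula correlation that governs the joint law of the squared envelopes $g_{k_i}$; the Gaussian-copula normalization in \eqref{eq-g-par} is precisely what bridges the two, and this is the single non-mechanical ingredient of the argument.
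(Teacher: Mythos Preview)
Your proposal is correct and follows essentially the same approach as the paper's proof, which is simply to plug the Rayleigh marginals \eqref{eq-marg-p}--\eqref{eq-marg-c} into the Gaussian copula expressions \eqref{eq-gc-def1}--\eqref{eq-gc-def2} and then invoke \eqref{eq-g-par} for the correlation matrix. Your write-up is more detailed (explicitly computing $\varphi^{-1}(F_{g_{k_i}}(r))$, the prefactor $\eta_i^{K_i}\mathrm{e}^{-K_i\eta_i r}$, and flagging the channel-versus-envelope correlation subtlety), but the underlying argument is identical.
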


\begin{corollary}
The ASC for the FAS under correlated Rayleigh fading channels is given by \eqref{eq-asc-ray} (see top of this page), in which $\vartheta_{G_{k_i,\tilde{k}_i}}$ is defined in \eqref{eq-g-par}.
\begin{figure*}[t]
\begin{align}\notag
\bar{\mathcal{C}}_\mathrm{s}\approx& \sum_{\tilde{n}=1}^{\tilde{N}}\frac{w_{\tilde{n}}\mathrm{e}^{\epsilon_{\tilde{n}}}}{1+\epsilon_{\tilde{n}}}\left[1-\Phi_{\mathbf{R}_{k_\mathrm{B},\tilde{k}_\mathrm{B}}}\left(\sqrt{2}\mathrm{erf}^{-1}\left(1-2\mathrm{e}^{-\frac{\eta_\mathrm{B}\epsilon_{\tilde{n}}}{\bar{\gamma}_\mathrm{B}}}\right),\dots,\sqrt{2}\mathrm{erf}^{-1}\left(1-2\mathrm{e}^{-\frac{\eta_\mathrm{B}\epsilon_{\tilde{n}}}{\bar{\gamma}_\mathrm{B}}}\right);\vartheta_{G_{k_\mathrm{B},\tilde{k}_\mathrm{B}}}\right)\right]\\
&\times \Phi_{\mathbf{R}_{k_\mathrm{E},\tilde{k}_\mathrm{E}}}\left(\sqrt{2}\mathrm{erf}^{-1}\left(1-2\mathrm{e}^{-\frac{\eta_\mathrm{E}\epsilon_{\tilde{n}}}{\bar{\gamma}_\mathrm{E}}}\right),\dots,\sqrt{2}\mathrm{erf}^{-1}\left(1-2\mathrm{e}^{-\frac{\eta_\mathrm{E}\epsilon_{\tilde{n}}}{\bar{\gamma}_\mathrm{E}}}\right);\vartheta_{G_{k_\mathrm{E},\tilde{k}_\mathrm{E}}}\right)\label{eq-asc-ray}
\end{align}
\hrulefill
\end{figure*}
\end{corollary}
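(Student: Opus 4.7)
The plan is to obtain the statement as a direct specialization of Proposition~\ref{pr-asc}. The expression in \eqref{eq-asc-gen} is written in a generic way that is linear in the copula $C$ and in the marginal CDFs, so the corollary will follow from two substitutions together with a small amount of algebraic bookkeeping in the arguments.

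First, I would specialize the copula. Replacing $C$ in \eqref{eq-asc-gen} by the Gaussian copula $C_{\mathrm{G}}$ defined in \eqref{eq-gc-def1} turns each factor of the form $C\left(F_{g_{1_i}}(\cdot),\dots,F_{g_{K_i}}(\cdot);\vartheta_C\right)$ into
\begin{equation*}
\Phi_{\mathbf{R}_{k_i,\tilde{k}_i}}\!\left(\varphi^{-1}\!\left(F_{g_{1_i}}(\cdot)\right),\dots,\varphi^{-1}\!\left(F_{g_{K_i}}(\cdot)\right);\vartheta_{G_{k_i,\tilde{k}_i}}\right),
\end{equation*}
where the identification of the Gaussian dependence parameter $\vartheta_{G_{k_i,\tilde k_i}}$ with the spatial correlation structure in \eqref{eq-cov-gen} is exactly the content of \eqref{eq-g-par}. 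No convergence or accuracy argument is needed here, since the Gauss--Laguerre error already lives inside Proposition~\ref{pr-asc}.

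Second, I would substitute the Rayleigh marginal CDF $F_{g_{k_i}}(r)=1-\mathrm{e}^{-\eta_i r}$ from \eqref{eq-marg-p}, evaluated at the node-dependent argument $r=\epsilon_{\tilde n}/\bar{\gamma}_i$ that is dictated by \eqref{eq-asc-gen}. Using $\varphi^{-1}(u)=\sqrt{2}\,\mathrm{erf}^{-1}(2u-1)$ together with the identity $2\left(1-\mathrm{e}^{-\eta_i r}\right)-1=1-2\mathrm{e}^{-\eta_i r}$ converts every argument of $\Phi_{\mathbf{R}_{k_i,\tilde{k}_i}}$ into $\sqrt{2}\,\mathrm{erf}^{-1}\!\left(1-2\mathrm{e}^{-\eta_i \epsilon_{\tilde n}/\bar{\gamma}_i}\right)$. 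Because the Rayleigh marginals are identically distributed across the $K_i$ ports (same scale $\eta_i$), all $K_i$ entries inside a given $\Phi_{\mathbf{R}_{k_i,\tilde{k}_i}}$ coincide, which explains why the displayed form in \eqref{eq-asc-ray} repeats a single expression $K_i$ times at each node.

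There is no real obstacle; the only item requiring attention is keeping the Bob and Eve sides separate, so that $(\eta_\mathrm{B},\bar{\gamma}_\mathrm{B},\mathbf{R}_{k_\mathrm{B},\tilde k_\mathrm{B}})$ and $(\eta_\mathrm{E},\bar{\gamma}_\mathrm{E},\mathbf{R}_{k_\mathrm{E},\tilde k_\mathrm{E}})$ are plugged into the correct copula factor and the $\left[1-\Phi_{\mathbf{R}_{k_\mathrm{B},\tilde k_\mathrm{B}}}(\cdot)\right]\Phi_{\mathbf{R}_{k_\mathrm{E},\tilde k_\mathrm{E}}}(\cdot)$ product structure of \eqref{eq-asc-gen} is preserved. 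Carrying out these two substitutions in \eqref{eq-asc-gen} yields \eqref{eq-asc-ray}, completing the proof.
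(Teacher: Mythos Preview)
Your proof is correct and follows essentially the same approach as the paper: specialize Proposition~\ref{pr-asc} by inserting the Rayleigh marginal CDF from \eqref{eq-marg-p} and the Gaussian copula definition \eqref{eq-gc-def1} together with \eqref{eq-g-par} into \eqref{eq-asc-gen}. You simply provide more algebraic detail than the paper's one-line proof.
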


\begin{proof}
By inserting the marginal CDF $F_{g_{k_i}}\left(r\right)$ into \eqref{eq-asc-gen} and the Gaussian copula definition as well as \eqref{eq-g-par}, the proof is accomplished. 
\end{proof}

Besides, note that the asymptotic behaviour of the ASC in the high SNR regime, i.e., $\bar{\gamma}_\mathrm{B}$, is given by \eqref{eq-asc-asy}.
\begin{figure*}[t]
\begin{align}
\bar{\mathcal{C}}_\mathrm{s}^\infty\approx \sum_{\tilde{n}=1}^{\tilde{N}}\frac{w_{\tilde{n}}\mathrm{e}^{\epsilon_{\tilde{n}}}}{1+\epsilon_{\tilde{n}}} \Phi_{\mathbf{R}_{k_\mathrm{E},\tilde{k}_\mathrm{E}}}\left(\sqrt{2}\mathrm{erf}^{-1}\left(1-2\mathrm{e}^{-\frac{\eta_\mathrm{E}\epsilon_{\tilde{n}}}{\bar{\gamma}_\mathrm{E}}}\right),\dots,\sqrt{2}\mathrm{erf}^{-1}\left(1-2\mathrm{e}^{-\frac{\eta_\mathrm{E}\epsilon_{\tilde{n}}}{\bar{\gamma}_\mathrm{E}}}\right);\vartheta_{G_{k_\mathrm{E},\tilde{k}_\mathrm{E}}}\right)\label{eq-asc-asy}
\end{align}
\hrulefill
\end{figure*}

\begin{corollary}
The SOP for the considered 2D fluid antenna-aided secure communication system under correlated Rayleigh fading channels is given by \eqref{eq-sop-ray} (see top of this page), in which $\vartheta_{G_{k_i,\tilde{k}_i}}$ is defined in \eqref{eq-g-par}.
\begin{figure*}[t]
\begin{align}\notag
P_\mathrm{sop}\approx& \sum_{\tilde{n}=1}^{\tilde{N}}\frac{w_{\tilde{n}}\mathrm{e}^{\epsilon_{\tilde{n}}}{\lambda^{K_\mathrm{E}}_\mathrm{E}}\mathrm{e}^{-\frac{K_\mathrm{E}\eta_\mathrm{E}\epsilon_{\tilde{n}}}{\bar{\gamma}_\mathrm{E}}}}{\left(1+\epsilon_{\tilde{n}}\right) \bar{\gamma}_\mathrm{E}^{K_\mathrm{E}}\sqrt{{\rm det}\left(\mathbf{R}_{k_\mathrm{E},\tilde{k}_\mathrm{E}}\right)}} \exp\left(-\frac{1}{2}\left(\boldsymbol{\varphi}_{g_{k_\mathrm{E}}}^{-1}\right)^T\left(\mathbf{R}_{k_\mathrm{E},\tilde{k}_\mathrm{E}}^{-1}-\mathbf{I}\right)\boldsymbol{\varphi}_{g_{k_\mathrm{E}}}^{-1}\right)\\
&\times \Phi_{\mathbf{R}_{k_\mathrm{B},\tilde{k}_\mathrm{B}}}\left(\sqrt{2}\mathrm{erf}^{-1}\left(1-2\mathrm{e}^{-\frac{\eta_\mathrm{B}\left(R_\mathrm{o}\epsilon_{\tilde{n}}+R_\mathrm{t}\right)}{\bar{\gamma}_\mathrm{B}}}\right),\dots,\sqrt{2}\mathrm{erf}^{-1}\left(1-2\mathrm{e}^{-\frac{\eta_\mathrm{B}\left(R_\mathrm{o}\epsilon_{\tilde{n}}+R_\mathrm{t}\right)}{\bar{\gamma}_\mathrm{B}}}\right);\vartheta_{G_{k_\mathrm{B},\tilde{k}_\mathrm{B}}}\right)\label{eq-sop-ray}
\end{align}
\hrulefill
\end{figure*}
\end{corollary}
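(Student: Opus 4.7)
The plan is to specialize the general SOP expression from Proposition~\ref{pro-sop} to the correlated Rayleigh marginals paired with the Gaussian copula. The three ingredients I need are: the marginal CDF and PDF in \eqref{eq-marg-p} and \eqref{eq-marg-c}, namely $F_{g_{k_i}}(r)=1-\mathrm{e}^{-\eta_i r}$ and $f_{g_{k_i}}(r)=\eta_i\mathrm{e}^{-\eta_i r}$; the Gaussian copula and its density in \eqref{eq-gc-def1} and \eqref{eq-gc-def2}; and the identification of the copula correlation matrix $\mathbf{R}_{k_i,\tilde{k}_i}$ with the spatial covariance model via \eqref{eq-rho} and \eqref{eq-g-par}. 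No new integral must be evaluated, since the Gauss--Laguerre truncation is already in place from \eqref{eq-sop-gen}; the proof is a substitution argument.

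First, for the Eve-side factor I would plug $f_{g_{k_\mathrm{E}}}(\epsilon_{\tilde n}/\bar\gamma_\mathrm{E})=\eta_\mathrm{E}\mathrm{e}^{-\eta_\mathrm{E}\epsilon_{\tilde n}/\bar\gamma_\mathrm{E}}$ into the product $\prod_{k_\mathrm{E}=1}^{K_\mathrm{E}} f_{g_{k_\mathrm{E}}}(\cdot)/\bar\gamma_\mathrm{E}^{k_\mathrm{E}}$ of \eqref{eq-sop-gen}, which collapses the $K_\mathrm{E}$ identical Rayleigh terms to $\eta_\mathrm{E}^{K_\mathrm{E}}\mathrm{e}^{-K_\mathrm{E}\eta_\mathrm{E}\epsilon_{\tilde n}/\bar\gamma_\mathrm{E}}$ once the $\bar\gamma_\mathrm{E}$ powers are reconciled to $\bar\gamma_\mathrm{E}^{K_\mathrm{E}}$. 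The Gaussian copula density in \eqref{eq-gc-def2}, evaluated at the marginal CDFs $1-\mathrm{e}^{-\eta_\mathrm{E}\epsilon_{\tilde n}/\bar\gamma_\mathrm{E}}$, produces the quantile vector $\boldsymbol{\varphi}_{g_{k_\mathrm{E}}}^{-1}$ with entries $\sqrt{2}\,\mathrm{erf}^{-1}(1-2\mathrm{e}^{-\eta_\mathrm{E}\epsilon_{\tilde n}/\bar\gamma_\mathrm{E}})$, and contributes the factor $\exp\bigl(-\tfrac{1}{2}(\boldsymbol{\varphi}_{g_{k_\mathrm{E}}}^{-1})^T(\mathbf{R}_{k_\mathrm{E},\tilde{k}_\mathrm{E}}^{-1}-\mathbf{I})\boldsymbol{\varphi}_{g_{k_\mathrm{E}}}^{-1}\bigr)/\sqrt{\det(\mathbf{R}_{k_\mathrm{E},\tilde{k}_\mathrm{E}})}$, which is precisely the Eve-side structure in \eqref{eq-sop-ray}.

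Second, for the Bob-side factor, the copula CDF $C\bigl(F_{g_{1_\mathrm{B}}}(\cdot),\dots,F_{g_{K_\mathrm{B}}}(\cdot);\vartheta_C\bigr)$ evaluated at the shifted argument $(R_\mathrm{o}\epsilon_{\tilde n}+R_\mathrm{t})/\bar\gamma_\mathrm{B}$ becomes, via \eqref{eq-gc-def1}, the multivariate normal CDF $\Phi_{\mathbf{R}_{k_\mathrm{B},\tilde{k}_\mathrm{B}}}$ whose $K_\mathrm{B}$ identical arguments equal $\sqrt{2}\,\mathrm{erf}^{-1}\bigl(1-2\mathrm{e}^{-\eta_\mathrm{B}(R_\mathrm{o}\epsilon_{\tilde n}+R_\mathrm{t})/\bar\gamma_\mathrm{B}}\bigr)$, and the copula parameter $\vartheta_C$ is replaced by $\vartheta_{G_{k_\mathrm{B},\tilde{k}_\mathrm{B}}}$ through \eqref{eq-g-par}. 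Multiplying the Bob and Eve contributions by the Gauss--Laguerre weight $w_{\tilde n}\mathrm{e}^{\epsilon_{\tilde n}}/(1+\epsilon_{\tilde n})$ and summing over $\tilde n$ delivers \eqref{eq-sop-ray}.

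The step I expect to require the most care is the bookkeeping: reconciling the product $\prod_{k_\mathrm{E}=1}^{K_\mathrm{E}}\bar\gamma_\mathrm{E}^{k_\mathrm{E}}$ inherited from the generic form against the single power $\bar\gamma_\mathrm{E}^{K_\mathrm{E}}$ that appears in the final expression, and verifying that the correlation matrix built from the Bessel covariance in \eqref{eq-rho} via the Cholesky normalization cited before \eqref{eq-g-par} is positive definite so that both $\det(\mathbf{R}_{k_\mathrm{E},\tilde{k}_\mathrm{E}})$ and $\mathbf{R}_{k_\mathrm{E},\tilde{k}_\mathrm{E}}^{-1}$ are well defined for all port geometries considered. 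Once those points are settled, the remaining manipulations are purely mechanical substitutions of \eqref{eq-marg-p}--\eqref{eq-marg-c} and \eqref{eq-gc-def1}--\eqref{eq-gc-def2} into \eqref{eq-sop-gen}.
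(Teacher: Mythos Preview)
Your proposal is correct and follows the same approach the paper uses for its corollaries in this section: direct substitution of the Rayleigh marginals \eqref{eq-marg-p}--\eqref{eq-marg-c} and the Gaussian copula \eqref{eq-gc-def1}--\eqref{eq-gc-def2} (together with \eqref{eq-g-par}) into the general SOP formula \eqref{eq-sop-gen}. The bookkeeping concern you flag about the $\bar\gamma_\mathrm{E}$ power is simply a notational artifact in \eqref{eq-sop-gen} (the intended Jacobian factor is $\bar\gamma_\mathrm{E}^{-1}$ per port, giving $\bar\gamma_\mathrm{E}^{K_\mathrm{E}}$ overall), and likewise $\lambda_\mathrm{E}^{K_\mathrm{E}}$ in \eqref{eq-sop-ray} should be read as $\eta_\mathrm{E}^{K_\mathrm{E}}$, so your computation is consistent with the paper's intended result.
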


\begin{corollary}
The SEE for the FAS under correlated Rayleigh fading channels is given by \eqref{eq-see-ray} (see top of next page), in which $\vartheta_{G_{k_i,\tilde{k}_i}}$ is defined in \eqref{eq-g-par}.
\begin{figure*}[]
\begin{align}\notag
\mathcal{E}_\mathrm{s}=& \sum_{\tilde{n}=1}^{\tilde{N}}\frac{w_{\tilde{n}}\mathrm{e}^{\epsilon_{\tilde{n}}}}{\left(1+\epsilon_{\tilde{n}}\right)\left(P/\alpha+P_\mathrm{c}+P_\mathrm{act}\right)}\left[1-\Phi_{\mathbf{R}_{k_\mathrm{B},\tilde{k}_\mathrm{B}}}\left(\sqrt{2}\mathrm{erf}^{-1}\left(1-2\mathrm{e}^{-\frac{\eta_\mathrm{B}\epsilon_{\tilde{n}}}{\bar{\gamma}_\mathrm{B}}}\right),\dots,\sqrt{2}\mathrm{erf}^{-1}\left(1-2\mathrm{e}^{-\frac{\eta_\mathrm{B}\epsilon_{\tilde{n}}}{\bar{\gamma}_\mathrm{B}}}\right);\vartheta_{G_{k_\mathrm{B},\tilde{k}_\mathrm{B}}}\right)\right]\\
&\times \Phi_{\mathbf{R}_{k_\mathrm{E},\tilde{k}_\mathrm{E}}}\left(\sqrt{2}\mathrm{erf}^{-1}\left(1-2\mathrm{e}^{-\frac{\eta_\mathrm{E}\epsilon_{\tilde{n}}}{\bar{\gamma}_\mathrm{E}}}\right),\dots,\sqrt{2}\mathrm{erf}^{-1}\left(1-2\mathrm{e}^{-\frac{\eta_\mathrm{E}\epsilon_{\tilde{n}}}{\bar{\gamma}_\mathrm{E}}}\right);\vartheta_{G_{k_\mathrm{E},\tilde{k}_\mathrm{E}}}\right)\label{eq-see-ray}
\end{align}
\hrulefill
\end{figure*}
\end{corollary}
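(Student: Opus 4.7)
The plan is to obtain \eqref{eq-see-ray} as a direct corollary of the general SEE expression \eqref{eq-see-gen}, by specializing each ingredient to the correlated Rayleigh/Gaussian-copula setting, exactly in parallel with the ASC derivation in \eqref{eq-asc-ray}. Since $\mathcal{E}_\mathrm{s}=\bar{\mathcal{C}}_\mathrm{s}/P_\mathrm{tot}$ by the defining identity \eqref{eq-def-see}, and the numerator of \eqref{eq-see-gen} coincides with the Gauss--Laguerre approximation of $\bar{\mathcal{C}}_\mathrm{s}$, the statement will follow once the two copula factors are rewritten under the Rayleigh-Gaussian-copula assumption and the denominator is specialized to the FAS (where a single port is activated).

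First I would fix the denominator: under the single-activated-port rule imposed in \eqref{eq:SNR} and in the footnote preceding \eqref{eq-def-see}, $\breve{K}=1$, so $P_\mathrm{tot}=P/\alpha+P_\mathrm{c}+P_\mathrm{act}$. Next I would substitute the Rayleigh marginal CDF \eqref{eq-marg-p} into the generic arguments $F_{g_{k_i}}(\epsilon_{\tilde{n}}/\bar{\gamma}_i)$ appearing in \eqref{eq-see-gen}, yielding $u_{q,i}=1-\mathrm{e}^{-\eta_i\epsilon_{\tilde{n}}/\bar{\gamma}_i}$ for each port index at node $i\in\{\mathrm{B},\mathrm{E}\}$. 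Then I would replace the generic copula $C(\,\cdot\,;\vartheta_C)$ by the Gaussian copula \eqref{eq-gc-def1}, using the standard-normal quantile identity $\varphi^{-1}(u)=\sqrt{2}\,\mathrm{erf}^{-1}(2u-1)$, which turns $\varphi^{-1}(u_{q,i})$ into $\sqrt{2}\,\mathrm{erf}^{-1}\!\bigl(1-2\mathrm{e}^{-\eta_i\epsilon_{\tilde{n}}/\bar{\gamma}_i}\bigr)$, exactly as in the two copula factors in \eqref{eq-see-ray}. Finally, the Gaussian copula parameter is identified with the correlation matrix $\mathbf{R}_{k_i,\tilde{k}_i}$ whose entries are given by \eqref{eq-g-par}, so $\vartheta_C\mapsto\vartheta_{G_{k_i,\tilde{k}_i}}$ in both the Bob and Eve factors.

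Collecting these substitutions and preserving the Gauss--Laguerre weights $w_{\tilde{n}}$ and nodes $\epsilon_{\tilde{n}}$ inherited from Proposition \ref{pr-asc} produces \eqref{eq-see-ray}. Equivalently, one may simply divide the already-specialized ASC expression \eqref{eq-asc-ray} by $P/\alpha+P_\mathrm{c}+P_\mathrm{act}$, which is how the preceding Proposition for the general SEE was itself proved. No significant obstacle is anticipated: the derivation is mechanical, and the only points requiring care are (i) the correct rewriting of $2u_{q,i}-1$ as $1-2\mathrm{e}^{-\eta_i\epsilon_{\tilde{n}}/\bar{\gamma}_i}$ in the $\mathrm{erf}^{-1}$ arguments, and (ii) the use of $\breve{K}=1$ so that the denominator reduces to $P/\alpha+P_\mathrm{c}+P_\mathrm{act}$ rather than carrying a multiplicative $\breve{K}$.
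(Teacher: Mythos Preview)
Your proposal is correct and follows essentially the same approach as the paper: the corollary is obtained by specializing the general SEE expression \eqref{eq-see-gen} to the Rayleigh/Gaussian-copula setting (or, equivalently, by dividing the already-specialized ASC in \eqref{eq-asc-ray} by $P/\alpha+P_\mathrm{c}+P_\mathrm{act}$ with $\breve{K}=1$), exactly mirroring how the paper derives the general SEE proposition and the Rayleigh ASC corollary.
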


Furthermore, the behavior of the considered FAS for sufficiently large $W_\mathrm{B}$ and $K_\mathrm{B}$, i.e., $W_\mathrm{B}\rightarrow\infty$ and/or $K_\mathrm{B}\rightarrow\infty$, can provide deeper insights into the derived expressions, which we provide in the following remarks.
\begin{remark}\label{remark4}
Given \eqref{eq-rho}, when $W_\mathrm{B}\rightarrow\infty$ for a fixed value of $K_i$ and $W_\mathrm{E}$, the spherical Bessel function reaches $0$. Consequently, the corresponding correlation matrix $\mathbf{R}_{k_\mathrm{B},\tilde{k}_\mathrm{B}}$ becomes $0$, and the joint CDF of multivariate normal distribution $\Phi_{\mathbf{R}_{k_\mathrm{B},\tilde{k}_\mathrm{B}}}$ in \eqref{eq-asc-ray}, \eqref{eq-sop-ray}, and \eqref{eq-see-ray} becomes a constant value, equal to the product of $K_\mathrm{B}$ marginal normal CDFs due to their independence. Hence, the ASC, SOP, and SEE converge to a constant value for $W_\mathrm{B}\rightarrow\infty$ and $K_\mathrm{B}=\mathrm{constant}$.
\end{remark}
\begin{remark}\label{remark5}
Given \eqref{eq-rho}, when $K_\mathrm{B}\rightarrow\infty$ for a fixed value of $W_i$ and $K_\mathrm{E}$, the spherical Bessel function reaches $1$. As such, the corresponding correlation matrix $\mathbf{R}_{k_\mathrm{B},\tilde{k}_\mathrm{B}}$ becomes $1$ (e.g., without loss of generality, let $\Omega_\mathrm{B}=1$), and the joint CDF of multivariate normal distribution $\Phi_{\mathbf{R}_{k_\mathrm{B},\tilde{k}_\mathrm{B}}}$ in \eqref{eq-asc-ray}, \eqref{eq-sop-ray}, and \eqref{eq-see-ray} becomes a constant value, bounded by the minimum of $K_\mathrm{B}$ marginal normal CDFs due to their full dependency. This upper bound is derived through the Fr\'{e}chet-Hoeffding theorem, which indicates that for any copula $C: \left[0,1\right]^d\rightarrow \left[0,1\right]$, the upper bound holds as $C\left(u_1,\dots,u_d\right)\leq\min\left\{u_1,\dots,u_d\right\}$ \cite[Theorem 2]{ghadi2022capacity}. Hence, the ASC, SOP, and SEE converge to a constant value for $K_\mathrm{B}\rightarrow\infty$ and $W_\mathrm{B}=\mathrm{constant}$. 
\end{remark}

\section{Numerical Results}\label{sec-num}
Here, we present numerical results to evaluate the considered system performance in terms of the ASC, SOP, and SEE which are confirmed by Monte Carlo simulations. We exploit Algorithm 1 in \cite{ghadi2023gaussian} to simulate the Gaussian copula by considering  the Cholesky decomposition of the defined correlation matrix $\mathbf{R}$. In addition, the joint CDF of the multivariate normal distribution can be estimated numerically or implemented via the mathematical package of  various programming languages. Moreover, simulation parameters are given in Table \ref{table1}.

\begin{table}\captionof{table}{Simulation Parameters}\label{table1} \centering
\begin{tabular}{ |p{3.5cm}||p{1.5cm}||p{2.1cm}|  }
	\hline
		\centering \textbf{Definition} &\centering \textbf{Parameter} & \hspace{0.6cm}\textbf{Value} \\
	\hline
Secrecy rate&\centering$R_\mathrm{s}$& $1$ bits \\
	\hline
Circuit power & \centering$P_\mathrm{c}$ & $20$ dBm  \\
	\hline
Port activation power & \centering$P_\mathrm{act}$ & $10$ dBm  \\
\hline
Number of activated ports & \centering$\breve{K}$ & $1$ \\
\hline
	Drain efficiency of HPA &\centering$\alpha$& $1$\\
	\hline
	  Path-loss exponent& \centering$\nu$ &$2.1$  \\
	  \hline
	Scale parameter & \centering$\eta_i$ &$1$\\
	\hline
			Trade-off parameter & \centering$\tilde{N}$ &$2$\\
	\hline
	Noise power at Bob& \centering$\delta^2_\mathrm{B}$ & $-30$ dBm\\
	\hline
	Noise power at Eve& \centering $\delta^2_\mathrm{E}$ & $-20$ dBm\\
	\hline
	Number of ports at Bob& \centering$K_\mathrm{B}$ & $\{4, 16, 25\}$\\
	\hline
		Number of ports at Eve & \centering$K_\mathrm{E}$ & 4\\
		\hline
		FAS size at Bob & \centering$W_\mathrm{B}$ & $\left\{1, 2.25, 9\right\}$  $\lambda^2$\\
	\hline
	FAS size at Eve& \centering$W_\mathrm{E}$ & $1$ $\lambda^2$\\
	\hline
	Number of antennas at Bob& \centering$N_\mathrm{B}$  &$\left\{2, 4, 8, 16\right\}$\\
	\hline
	Number of antennas at Eve& \centering$N_\mathrm{E}$ &$\left\{2, 4\right\}$\\
	\hline
	Alice-to-Eve distance& \centering$d_\mathrm{E}$ &$\left\{1, 3\right\}$ m \\
	\hline
	Alice-to-Bob distance& \centering$d_\mathrm{B}$ &$1$ m \\
	\hline
		    Transmit power& \centering$P$&$-27$ dBm  \\
	\hline
\end{tabular}
\end{table}

Fig.~\ref{fig_1} illustrates the performance of derived secrecy metrics in terms of the average SNR at Bob $\bar{\gamma}_\mathrm{B}$ and transmit power $P$ for given number of fluid antenna ports $K_\mathrm{B}$ and selected value of fluid antenna size $W_\mathrm{B}$. We consider first the balanced case ($\kappa=1$, i.e. $\kappa({\rm dB}) = 0$~dB) on which the eavesdropper channel is not degraded w.r.t. the legitimate one. In Figs.~\ref{fig_c_snr} and \ref{fig_p_g}, as expected, we can see that as $\bar{\gamma}_\mathrm{B}$ grows, the performance of ASC and SOP enhances for a fixed value of $\kappa$, which is reasonable since the quality of main channel (Alice-to-Bob) improves. However, note that the ASC reaches a saturation point after which it is not possible to increase the ASC by increasing the transmit power. In this regime, the capacities for the legitimate and eavesdropper's links are well approximated by $\mathcal{C}_i\approx \log_2\left(\gamma_i\right)$ and the ASC is upper-bounded by $\mathcal{C}_\mathrm{s}\approx \log_2\left(\kappa g_{\mathrm{FAS,B}}/g_{\mathrm{FAS,E}}\right)$ similar to the observation made in \cite{jeon2011bounds}. This is confirmed by the asymptotic ASC value that is reached a $\bar\gamma_\mathrm{B}$ grows. In Fig.~\ref{fig_e_p_k}, it can be observed that the SEE performance initially improves to its maximum value and then becomes worse, increasing the transmit power $P$. The main reason behind such a trend is that the ASC is initially able to overcome the power consumption before reaching the extreme point; however, by continuously increasing $P$, the effect of power consumption becomes more noticeable, and hence, the SEE performance weakens after the extreme point. Moreover, under the constant values of $K_\mathrm{E}$ and $W_\mathrm{E}$ in Figs.~\ref{fig_c_snr}--\ref{fig_e_p_w}, we can observe that considering a larger number of fluid antenna ports (e.g., $K_\mathrm{B}=4\times4$) or a larger value of fluid antenna size (e.g., $W_\mathrm{B}=3\times 3$) at Bob can improve the efficiency of SOP, ASC, and SEE, namely, more secure and reliable transmission. In Figs.~\ref{fig_c_snr}--\ref{fig_e_p_k}, we also consider two benchmarks: 1) both Bob and Eve can be considered the multiple antenna receivers (i.e, SIMO) with $N_i$ antennas utilizing  the MRC technique; and 2) both Bob and Eve exploit AS under the SC technique. In the case of AS, the antenna with the highest SNR is selected, while in SIMO systems, all $N_i$ antennas contribute to signal reception through MRC, optimizing the SNR by combining signals from multiple antennas. For both benchmarks, $N_i$ fixed-position antennas are placed with a separation of half the wavelength, which is a common practice to avoid mutual coupling and ensure independent fading channels. Given the point that only the best port with maximum received SNR is activated in the FAS but all $N_i$ antennas are activated in the SIMO system due to MRC, it can be clearly seen that the FAS with a pre-defined number of fluid antenna ports at Bob (e.g., $K_\mathrm{B}=2\times2$) can significantly provide a higher ASC and SEE and lower SOP compared to the SIMO system with a given number of antennas at Bob (e.g., $N_\mathrm{B}=4$). For instance, under a given SNR $\bar{\gamma}_\mathrm{B}=10$ dB, the ASC is almost equal to $1.5$ bps/Hz and the SOP is around in the order of $10^{-4}$ for the FAS, while these values are near $0.7$ bps/Hz and $10^{-3}$ in the SIMO system. Moreover, assuming a fixed antenna separation of $\lambda/2$, we observe that the FAS performs better compared to the AS that utilizes the SC scheme. Such an improvement is generally achieved due to the capability of fluid antenna in switching to the best port within a finite size $W$, the ability to adapt and reconfigure their antenna patterns dynamically based on the channel conditions through an extreme resolution of fading envelope. Moreover, the main reason of SEE enhancement in the FAS compared with the SIMO system is that activating only one port of FAS needs less power compared to the case of activating all antennas in the SIMO system. 
\begin{figure*}
	\centering
	\hspace{-0.5cm}\subfigure[ASC]{%
		\includegraphics[width=0.35\textwidth]{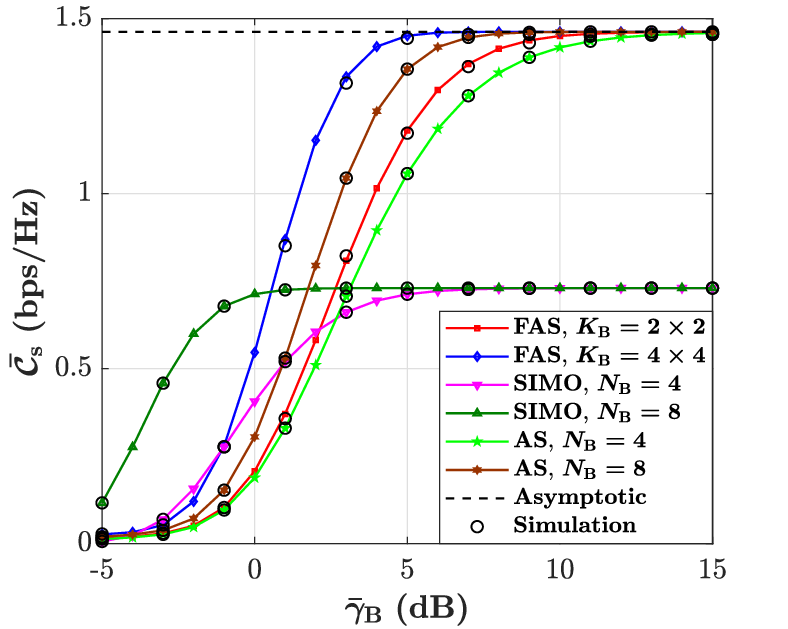}\label{fig_c_snr}%
	}\hspace{-0.3cm}
	\subfigure[SOP]{%
		\includegraphics[width=0.35\textwidth]{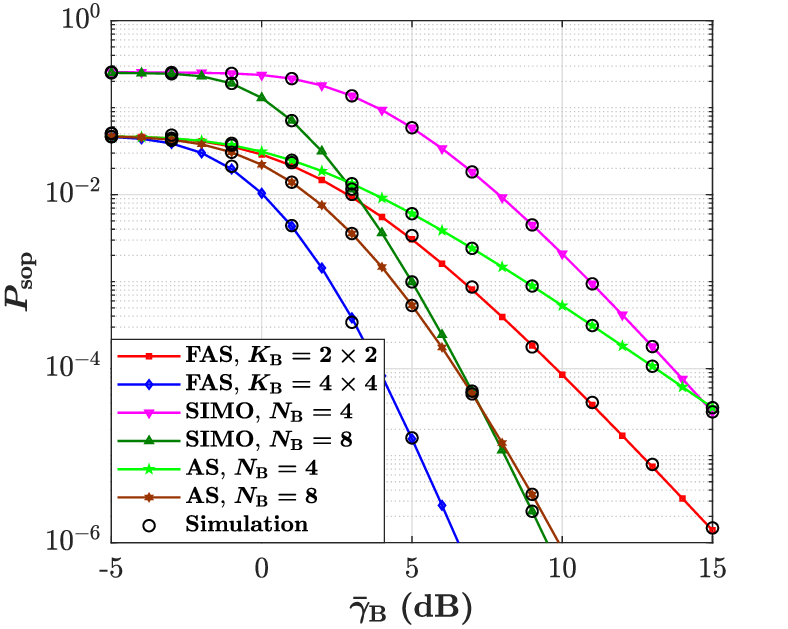}\label{fig_p_g}%
	}\hspace{-0.3cm}
	\subfigure[SEE]{%
		\includegraphics[width=0.35\textwidth]{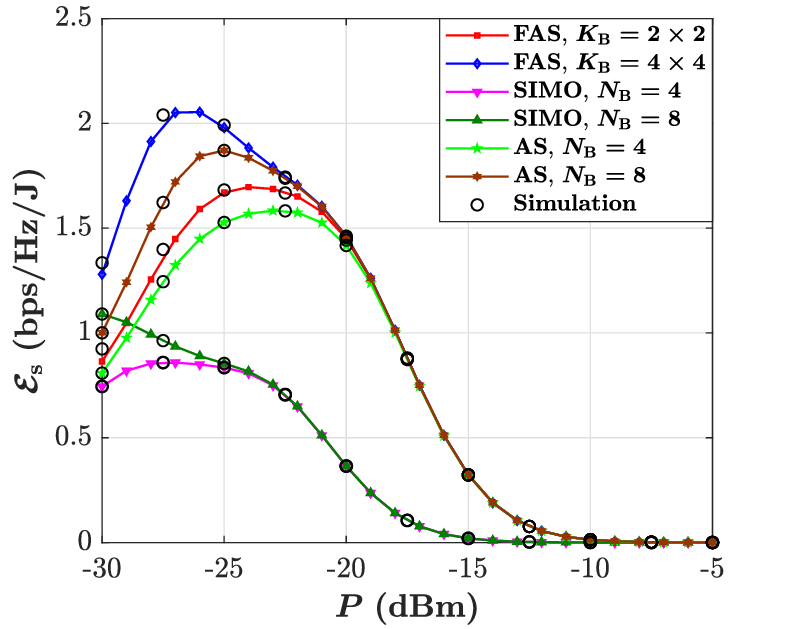}\label{fig_e_p_k}%
	}\hspace{-2cm}\\
	\hspace{-0.5cm}	\centering
	\subfigure[ASC]{%
		\includegraphics[width=0.35\textwidth]{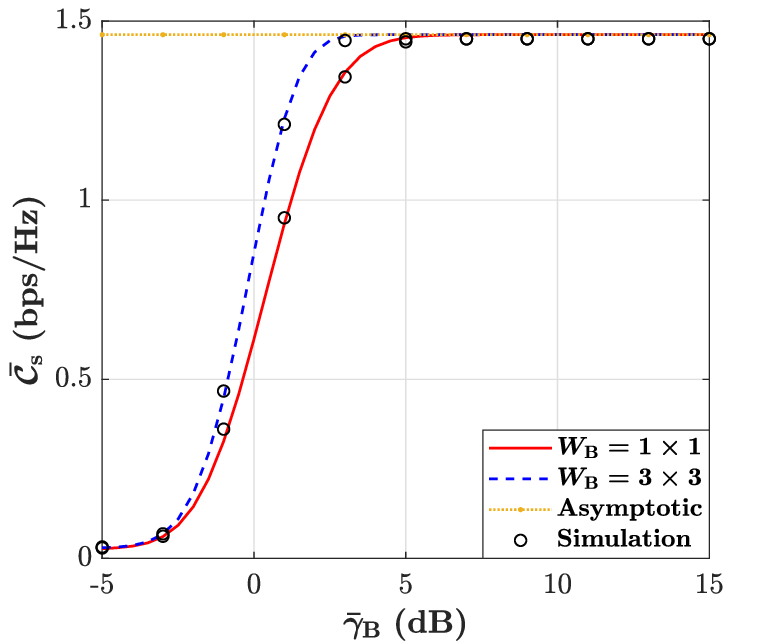}\label{fig_c_snr_w}%
	}\hspace{-0.35cm}
	\subfigure[SOP]{%
		\includegraphics[width=0.35\textwidth]{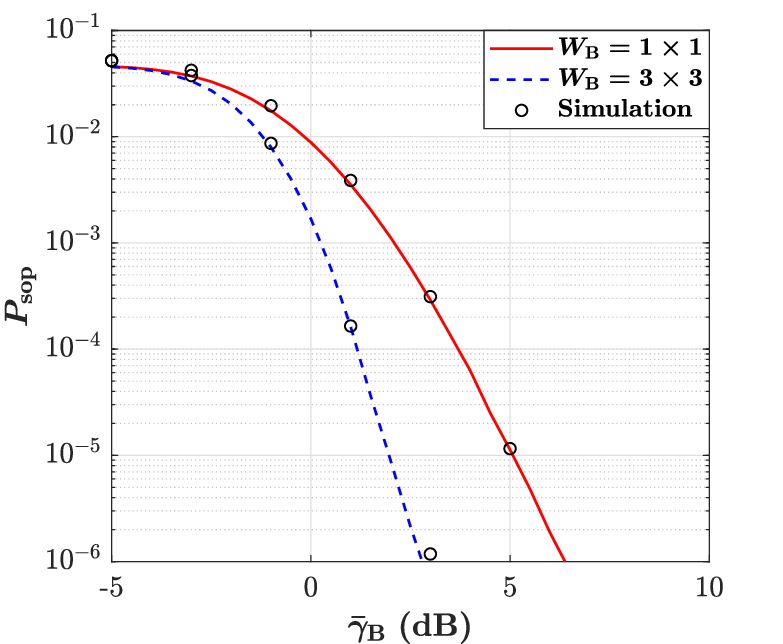}}\label{fig_p_snr_w}\hspace{-0.3cm}
	\subfigure[SEE]{%
		\includegraphics[width=0.35\textwidth]{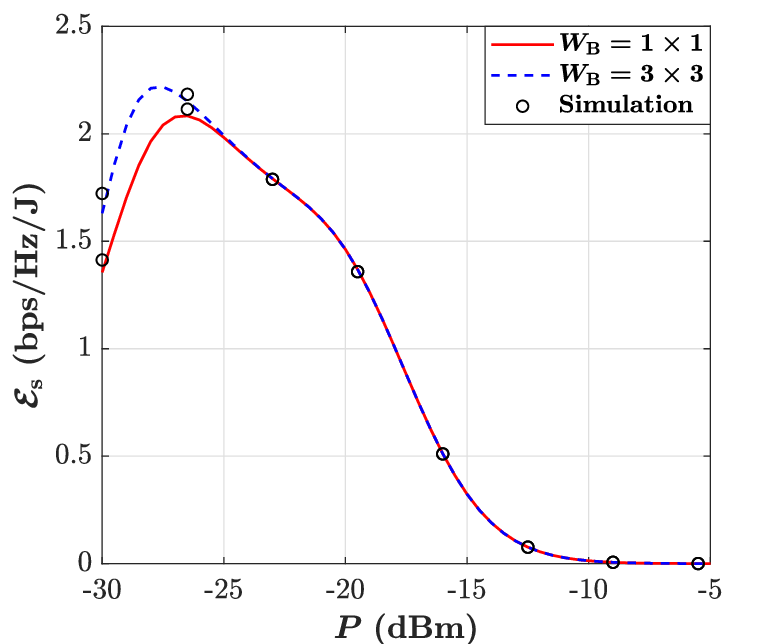}\label{fig_e_p_w}%
	}\hspace{2cm}
\caption{Secrecy metrics versus the average SNR at Bob $\bar{\gamma}_\mathrm{B}$ and transmit power $P$ for selected values of $K_\mathrm{B}$ and $W_\mathrm{B}$. 
}\label{fig_1}\vspace{0cm}
\end{figure*}

To gain more insights into how exactly the number of fluid antenna ports and fluid antenna size affect the secrecy performance, we present the results in Figs.~\ref{fig_2} and \ref{fig_3}, respectively. It is worth pointing out that from the PLS viewpoint, it is classically assumed that the main channel condition is always better than the eavesdropper channel; however, we consider all three possible scenarios relating to the channel condition, i.e., captured by $\kappa>=<1$. In Figs.~\ref{fig_c_k}--\ref{fig_e_k}, under a fixed number of fluid/multiple antenna at Eve, we can observe that the ASC, SOP, and SEE performance  for the FAS initially improves and then converges to a floor as $K_\mathrm{B}$ continuously increases (see Remark \ref{remark4}). This limitation is mainly due to the fact that when increasing $K_\mathrm{B}$ for a given value of $W_\mathrm{B}$, the space between ports decreases and the spatial correlation between them increases as it can be derived in \eqref{eq-rho}; hence, lower diversity gain is achieved until eventually saturating. In addition, in Fig.~\ref{fig_c_k}, we see that the ASC provided by the FAS is almost $0.5$ bps/Hz and $0.4$ bps/Hz higher than that of achieved by the SIMO system and the AS in the advantage ($\kappa>1$) case, respectively. Moreover, in Fig.~\ref{fig_p_k}, under the scenario that the main channel is better than the eavesdropper channel (i.e., $\kappa>1$), it can be found that the SOP of the FAS is nearly $10^{-3}$, while the SOP of SIMO and AS are approximately $3\times10^{-2}$ and $7\times10^{-3}$, respectively. Furthermore, an interesting point is that even under the case that the main channel is worse than the eavesdropper channel (i.e., disadvantage case,  $\kappa<1$), the SOP performance of FAS  marginally improves as $K_\mathrm{B}$ grows and provides lower values compared with the SIMO system. Additionally, the results in Fig.~\ref{fig_e_k} reveal that even if the number of antennas is equal to the number of fluid antenna ports (e.g., $N_\mathrm{B}=K_\mathrm{B}=16$), SEE achieves higher values in FAS, namely, the SEE of fluid antenna is $1.12$ bps/Hz/J higher than that of multiple fixed antennas and $0.17$ bps/Hz/J higher than that of AS. By increasing the fluid antenna size at Bob $W_\mathrm{B}$, it can be observed in Fig.~\ref{fig_c_w} that ASC increases and then plateaus. Such an improvement is mainly because increasing the spatial separation between the fluid antenna ports by increasing $W_\mathrm{B}$ for a fixed $K_\mathrm{B}$ leads to significant reduction in spatial correlation. Nonetheless, as shown in Fig.~\ref{fig_p_w}, the SOP of FAS improves as $W_\mathrm{B}$ raises and then converges to a constant value if $K_\mathrm{B}$ is finite. Moreover, following the trend of ASC, we can observe in Fig.~\ref{fig_e_w} that the SEE performance improves and then becomes constant by increasing $W_\mathrm{B}$ (see Remark \ref{remark5}). In contrast, in all the results in Figs.~\ref{fig_c_w}--\ref{fig_e_w}, it can be seen that the performance of ASC, SOP, and SEE is constant since the diversity gain of SIMO and AS remains the same even if $W_\mathrm{B}$ is increased.
 
To specifically evaluate the large-scale channel effects on system performance, we focus on the path-loss effect for the received SNR at node $i$ through $ d_i^{\nu}$ in \eqref{eq:SNR}, and now setting $\delta_{\rm B}=\delta_{\rm E}$ for convenience. By doing so, the secrecy metrics are directly impacted by the dissimilar path-loss effects. For this purpose, we evaluate the behavior of ASC, SOP, and SEE as functions of $d_\mathrm{B}$ for selected values of $d_\mathrm{E}$, as illustrated in Fig.~\ref{fig-dis}. Following the concept of the secrecy coverage region in \cite{ghadi2021newcopula}, we assume that Eve is located very close to Alice to determine how far Bob can be from Alice while maintaining secure communication. In all three figures, Figs.~\ref{fig-c-d}--\ref{fig-e-d}, we see that the performance of ASC, SOP, and SEE deteriorates as $d_\mathrm{B}$ increases for fixed values of $d_\mathrm{E}={1,3}$. This is reasonable since path-loss degrades the quality of the channel. Furthermore, by carefully observing the curves, we can see that secure communication can still be guaranteed even if the ratio of $d_\mathrm{B}$ to $d_\mathrm{E}$ is equal to or greater than $1$. Additionally, we observe that this performance improves when $K_\mathrm{B}$ and $W_\mathrm{B}$ increase from $4$ to $9$ and from $1\lambda^2$ to $2.25\lambda^2$, respectively.
\begin{figure*}
	\centering
\hspace{-0.5cm}	\subfigure[ASC]{%
		\includegraphics[width=0.35\textwidth]{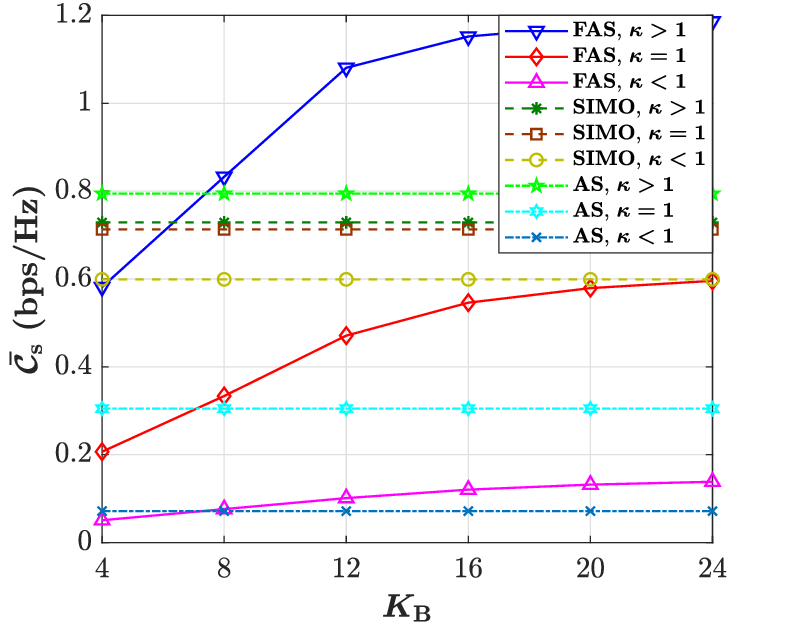}\label{fig_c_k}%
	}\hspace{-0.3cm}
	\subfigure[SOP]{%
		\includegraphics[width=0.35\textwidth]{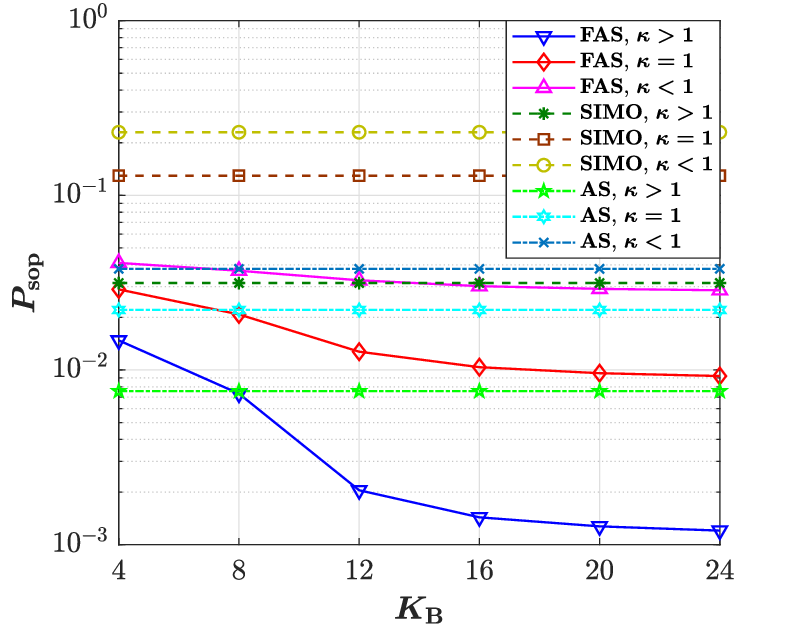}\hspace{-0.5cm}\label{fig_p_k}%
	}
		\subfigure[SEE]{%
		\includegraphics[width=0.35\textwidth]{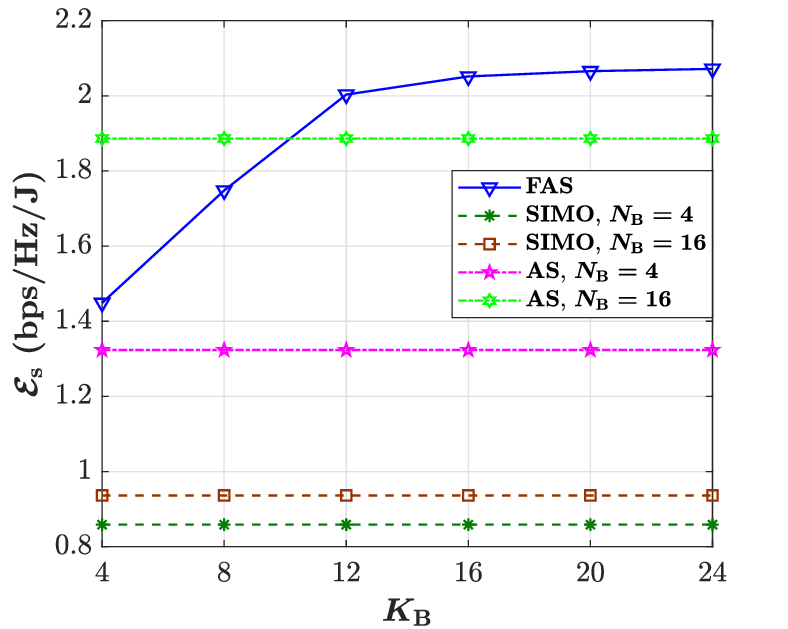}\label{fig_e_k}%
	}
	\caption{Secrecy metrics versus the  number of fluid antenna ports $K_\mathrm{B}$ for different cases of $\kappa$. 
	}\label{fig_2}\vspace{0cm}
\end{figure*}
\begin{figure*}
	\centering
\hspace{-0.5cm}	\subfigure[ASC]{%
		\includegraphics[width=0.35\textwidth]{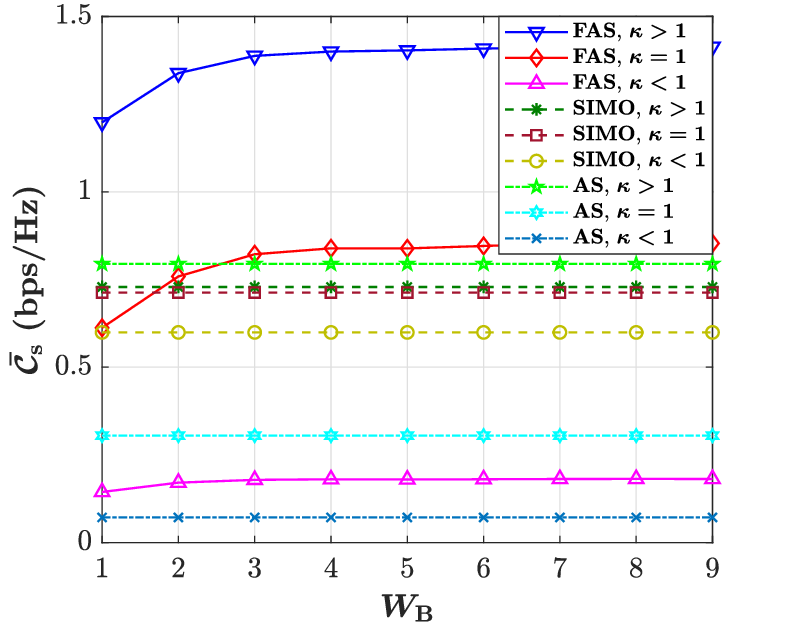}\label{fig_c_w}%
	}\hspace{-0.3cm}
	\subfigure[SOP]{%
		\includegraphics[width=0.35\textwidth]{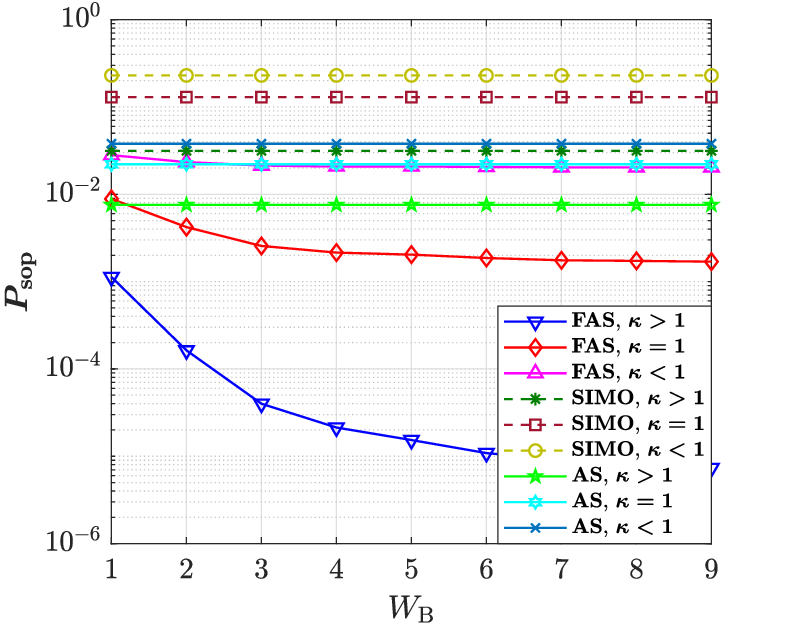}\label{fig_p_w}%
	}\hspace{-0.5 cm}
	\subfigure[SEE]{%
		\includegraphics[width=0.35\textwidth]{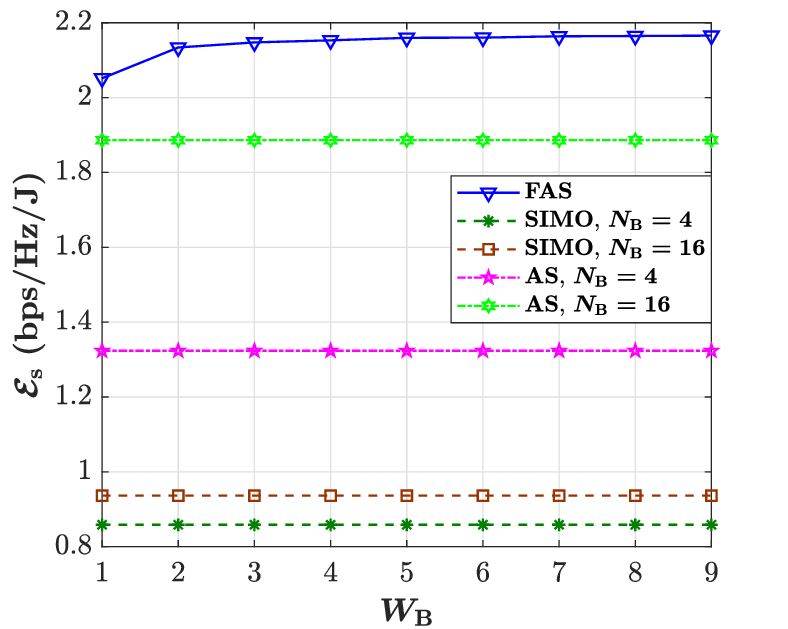}\label{fig_e_w}%
	}
	\caption{Secrecy metrics versus the fluid antenna size $W_\mathrm{B}$ for different cases of $\kappa$. 
	}\label{fig_3}\vspace{0cm}
\end{figure*}

\begin{figure*}
	\centering
	\hspace{-0.5cm}	\subfigure[ASC]{%
		\includegraphics[width=0.35\textwidth]{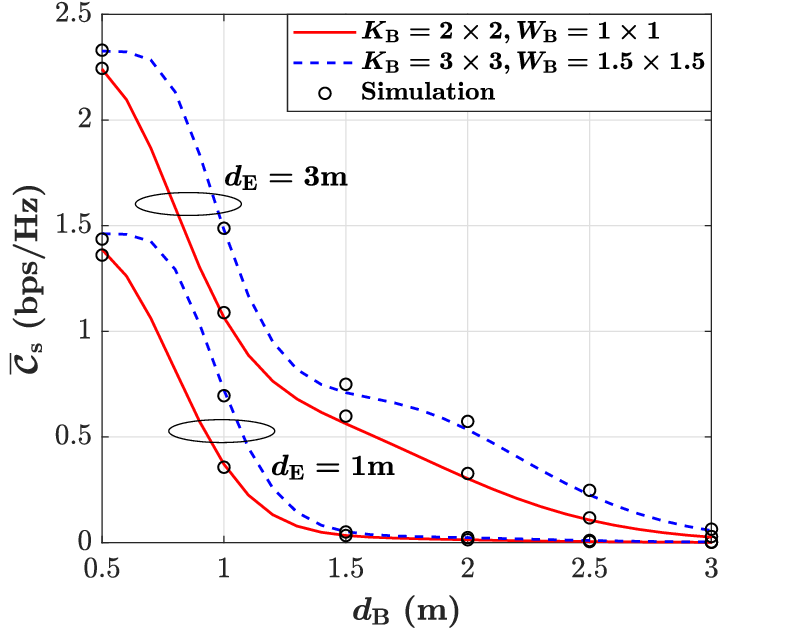}\label{fig-c-d}%
	}\hspace{-0.3cm}
	\subfigure[SOP]{%
		\includegraphics[width=0.35\textwidth]{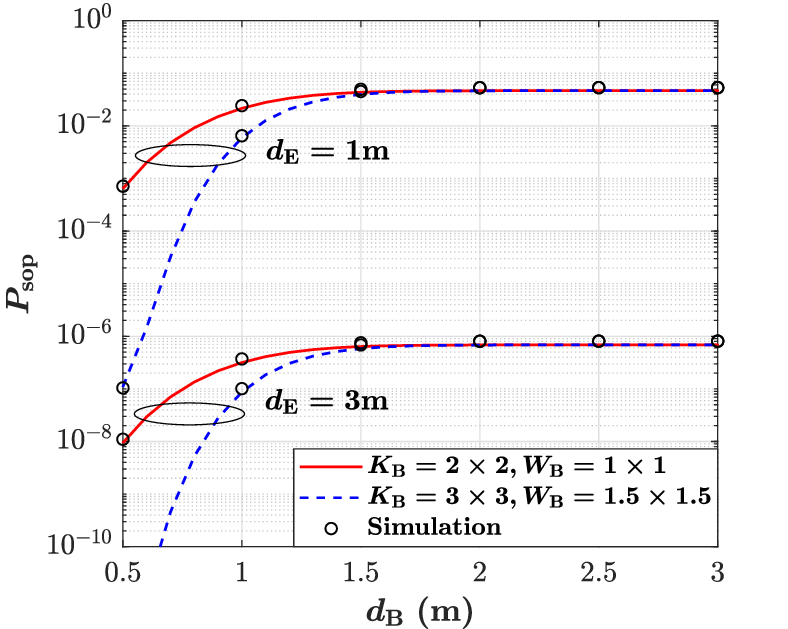}\label{fig-o-d}%
	}\hspace{-0.5 cm}
	\subfigure[SEE]{%
		\includegraphics[width=0.35\textwidth]{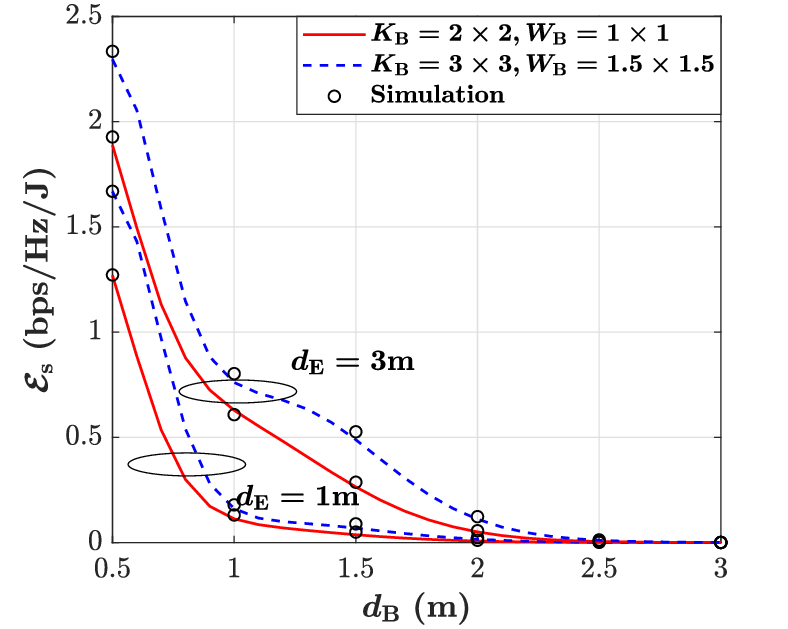}\label{fig-e-d}%
	}
	\caption{Secrecy metrics versus the distance between Alice and Bob $d_\mathrm{B}$ for selected values of $d_\mathrm{E}$, $K_\mathrm{B}$, and $W_\mathrm{B}$.}\label{fig-dis}\vspace{0cm}
\end{figure*}

\section{Conclusion}\label{sec-con}
This paper studied the secrecy performance of FAS-aided communication under arbitrarily correlated fading channels. In particular, we considered that a single fixed-position antenna transmitter sends a confidential message to a legitimate receiver while an eavesdropper attempts to decode the transmitted signal. We also assumed that both the legitimate receiver and the eavesdropper are equipped with the FAS technology. Under these assumptions, by utilizing the copula method, we derived the general distributions of the equivalent channel at both nodes, which are valid for any arbitrarily correlated fading distributions. Then, we obtained general analytical expressions for the ASC, SOP, and SEE by exploiting the Gauss-Laguerre quadrature. Furthermore, to analyze the system's performance, we derived the ASC, SOP, and SEE under correlated Rayleigh fading channels using the Gaussian copula. Additionally, we evaluated the system behavior in the high SNR regime by deriving the asymptotic expressions of the ASC. Finally, by numerically comparing the FAS with the SIMO system and AS, we showed that considering the FAS instead of TAS can remarkably enhance the secrecy performance. \vspace{-0.2cm}
\appendices
\section{Proof of Proposition \ref{pr-asc}}\label{app-asc}
Given the definition of ASC, \eqref{eq-def-asc} can be rewritten as 
\begin{align}
	&\hspace{-1mm}\bar{\mathcal{C}}_\mathrm{s}=\int_0^\infty\hspace{-2mm}\int_0^{\gamma_{{\mathrm{B}}}} \hspace{-2mm}\log_2\left(\frac{1+\gamma_{{\mathrm{B}}}}{1+\gamma_{{\mathrm{E}}}}\right)f_{\gamma_{{\mathrm{B}}}}\left(\gamma_{{\mathrm{B}}}\right)f_{\gamma_{{\mathrm{E}}}}\left(\gamma_{{\mathrm{E}}}\right)d\gamma_{{\mathrm{E}}}d\gamma_{{\mathrm{B}}}\\\notag
	&\hspace{2mm}=\int_0^\infty\hspace{-2mm}\int_0^\infty\log_2\left(1+\gamma_{{\mathrm{B}}}\right)f_{\gamma_{{\mathrm{B}}}}\left(\gamma_{{\mathrm{B}}}\right)f_{\gamma_{{\mathrm{E}}}}\left(\gamma_{{\mathrm{E}}}\right)d\gamma_{{\mathrm{E}}} d\gamma_{{\mathrm{B}}}\\ \notag
	&\quad-\int_0^\infty \log_2\left(1+\gamma_{{\mathrm{B}}}\right)f_{\gamma_{{\mathrm{B}}}}\left(\gamma_{{\mathrm{B}}}\right)\bar{F}_{\gamma_{{\mathrm{E}}}}\left(\gamma_{{\mathrm{B}}}\right)d\gamma_{{\mathrm{B}}} \\
	&\quad-\int_0^\infty \log_2\left(1+\gamma_{{\mathrm{B}}}\right)f_{\gamma_{{\mathrm{E}}}}\left(\gamma_{{\mathrm{B}}}\right)\bar{F}_{\gamma_{{\mathrm{B}}}}\left(\gamma_{{\mathrm{B}}}\right)d\gamma_{{\mathrm{B}}}\\
	&=\frac{1}{\ln 2}\left[\underset{\mathcal{I}_1}{\underbrace{\int_0^\infty \frac{\bar{F}_{\gamma_{{\mathrm{B}}}}\left(\gamma_{{\mathrm{B}}}\right)}{1+\gamma_{{\mathrm{B}}}}d\gamma_{{\mathrm{B}}}}}-\underset{\mathcal{I}_2}{\underbrace{\int_0^\infty \frac{\bar{F}_{\gamma_{{\mathrm{B}}}}\left(\gamma_{{\mathrm{B}}}\right)\bar{F}_{\gamma_{{\mathrm{E}}}}\left(\gamma_{{\mathrm{B}}}\right)}{1+\gamma_{{\mathrm{B}}}}d\gamma_{{\mathrm{B}}}}}\right]\\
	&=\frac{1}{\ln 2}\int_0^\infty \frac{\bar{F}_{\gamma_{{\mathrm{B}}}}\left(\gamma_{{\mathrm{B}}}\right)}{1+\gamma_{{\mathrm{B}}}}F_{\gamma_{{\mathrm{E}}}}\left(\gamma_{{\mathrm{B}}}\right)d\gamma_{{\mathrm{B}}},\label{eq-app1}
\end{align}
in which $\bar{F}_{\gamma_{{\mathrm{B}}}}\left(\gamma_{{\mathrm{B}}}\right)=1-F_{\gamma_{{\mathrm{B}}}}\left(\gamma_{{\mathrm{B}}}\right)$ is the complementary CDF (CCDF) of the SNR $\gamma_{{\mathrm{B}}}$. The integral $\mathcal{I}_1$ mentions the ASC
when Eve is not available and the integral $\mathcal{I}_2$ indicates the ASC loss due to Eve. By inserting \eqref{eq-cdf-gen} into \eqref{eq-app1} and after doing the transformation of random variables, we have
\begin{align}\nonumber
\bar{\mathcal{C}}_\mathrm{s}=&\frac{1}{\ln 2}\int_0^\infty \frac{\left[1-C\left(F_{g_{1,\mathrm{B}}}\left(\frac{\gamma_{{\mathrm{B}}}}{\bar{\gamma}_\mathrm{B}}\right),\dots, F_{g_{k,\mathrm{B}}}\left(\frac{\gamma_{{\mathrm{B}}}}{\bar{\gamma}_\mathrm{B}}\right);\vartheta_C\right)\right]}{1+\gamma_{{\mathrm{B}}}}\\
&\,\times C\left(F_{g_{1,\mathrm{E}}}\left(\frac{\gamma_{{\mathrm{B}}}}{\bar{\gamma}_\mathrm{E}}\right),\dots, F_{g_{k,\mathrm{E}}}\left(\frac{\gamma_{{\mathrm{B}}}}{\bar{\gamma}_\mathrm{E}}\right);\vartheta_C\right)d\gamma_{{\mathrm{B}}}.\label{eq-app2}
\end{align}
Since the provided integral in \eqref{eq-app2} is mathematically intractable to compute, we exploit the Gauss-Laguerre quadrature method which is defined as the
following lemma.

\begin{lemma}\label{lemma1}
	The Gauss-Laguerre quadrature is defined as \cite{abromowitz1972handbook}
	\begin{align}
		\int_0^\infty \mathrm{e}^{-x}\Lambda(x)dx\approx\sum_{\tilde{n}=1}^{\tilde{N}}w_{\tilde{n}}\Lambda\left(\epsilon_{\tilde{n}}\right)
	\end{align}
	in which $\epsilon_{\tilde{n}}$ is the $\tilde{n}$-th root of Laguerre polynomial $L_{\tilde{N}}\left(\epsilon_{\tilde{n}}\right)$ and $w_{\tilde{n}}=\frac{\epsilon_{\tilde{n}}}{2\left(\tilde{N}+1\right)^2L^2_{\tilde{N}+1}\left(\epsilon_{\tilde{n}}\right)}$.
\end{lemma}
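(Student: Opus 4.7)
The identity stated as Lemma~\ref{lemma1} is the classical $\tilde{N}$-point Gauss--Laguerre quadrature, invoked in the appendix as the final tool needed to turn the integral in~\eqref{eq-app2} into the compact sum~\eqref{eq-asc-gen}. My plan is therefore to follow the standard orthogonal-polynomial template rather than introduce anything new. The target is to show that (i) with the nodes $\epsilon_{\tilde{n}}$ chosen as the zeros of the Laguerre polynomial $L_{\tilde{N}}$ and the weights $w_{\tilde{n}}$ as stated, the rule is \emph{exact} for every polynomial $\Lambda$ of degree at most $2\tilde{N}-1$; (ii) the closed-form weight expression is correct; (iii) for sufficiently regular $\Lambda$ the exact identity for polynomials becomes the asserted approximation as $\tilde{N}\to\infty$.

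First I would record the facts about the Laguerre polynomials $\{L_n\}_{n\ge 0}$ I need. They are orthogonal on $[0,\infty)$ with respect to the weight $e^{-x}$, which can be verified either through the Rodrigues formula $L_n(x)=\tfrac{e^{x}}{n!}\tfrac{d^{n}}{dx^{n}}(x^{n}e^{-x})$ and repeated integration by parts, or by working directly from the three-term recurrence. A standard sign-change argument then forces $L_{\tilde{N}}$ to have exactly $\tilde{N}$ simple positive roots $\epsilon_1,\ldots,\epsilon_{\tilde{N}}$: if $L_{\tilde{N}}$ had only $k<\tilde{N}$ sign changes at interior points $\xi_1,\ldots,\xi_k$, the product $L_{\tilde{N}}(x)\prod_{j=1}^{k}(x-\xi_j)$ would keep a constant sign on $(0,\infty)$, yet its integral against $e^{-x}$ would have to vanish by orthogonality of $L_{\tilde{N}}$ against the degree-$k$ polynomial $\prod_j(x-\xi_j)$, a contradiction.

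Second, I would introduce the interpolatory rule at those nodes: set $w_{\tilde{n}}:=\int_0^\infty e^{-x}\ell_{\tilde{n}}(x)\,dx$, where $\ell_{\tilde{n}}$ is the Lagrange basis polynomial satisfying $\ell_{\tilde{n}}(\epsilon_{\tilde{m}})=\delta_{\tilde{n}\tilde{m}}$. By construction the rule reproduces polynomials of degree $<\tilde{N}$ exactly. The upgrade to degree $\le 2\tilde{N}-1$ uses the Euclidean decomposition $p=qL_{\tilde{N}}+r$ with $\deg q,\deg r<\tilde{N}$: the $qL_{\tilde{N}}$ piece integrates to zero by orthogonality of $L_{\tilde{N}}$ against $q$, and also vanishes at every $\epsilon_{\tilde{n}}$ since $L_{\tilde{N}}(\epsilon_{\tilde{n}})=0$, while the $r$ piece is handled by the previous step. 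This establishes item (i).

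The main obstacle is item (ii), deriving the explicit closed-form $w_{\tilde{n}}=\epsilon_{\tilde{n}}/[2(\tilde{N}+1)^{2}L_{\tilde{N}+1}^{2}(\epsilon_{\tilde{n}})]$ from the integral definition of $w_{\tilde{n}}$, because this is what actually makes the rule usable in the downstream derivation of~\eqref{eq-asc-gen}. I would use the Christoffel--Darboux identity applied to the reproducing kernel $K_{\tilde{N}}(x,y)=\sum_{n=0}^{\tilde{N}}L_n(x)L_n(y)/h_n$, which collapses via the three-term recurrence to a ratio of $L_{\tilde{N}+1}(x)L_{\tilde{N}}(y)-L_{\tilde{N}}(x)L_{\tilde{N}+1}(y)$ over $x-y$; taking $y\to x=\epsilon_{\tilde{n}}$ (so that $L_{\tilde{N}}(x)=0$ forces the derivative $L'_{\tilde{N}}(\epsilon_{\tilde{n}})$ to appear in the limit) and identifying $w_{\tilde{n}}=1/K_{\tilde{N}}(\epsilon_{\tilde{n}},\epsilon_{\tilde{n}})$ should after simplification produce the stated form under the normalization tacit in the lemma; pinning down the factor of $2$ and the appearance of $L_{\tilde{N}+1}$ rather than $L'_{\tilde{N}}$ requires one more application of the recurrence at $x=\epsilon_{\tilde{n}}$. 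Finally, for item (iii), the approximation statement for general $\Lambda$ follows from density of polynomials in a weighted $L^2$ space, with the error decaying provided $\Lambda$ is sufficiently regular at infinity relative to the weight $e^{-x}$.
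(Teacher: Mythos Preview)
Your proposal is mathematically sound and follows the standard orthogonal-polynomial template for establishing Gaussian quadrature rules, but it is worth noting that the paper does \emph{not} prove Lemma~\ref{lemma1} at all: the lemma is simply stated and attributed to \cite{abromowitz1972handbook} as a classical result, then immediately applied to~\eqref{eq-app2} to obtain~\eqref{eq-asc-gen}. In other words, the paper treats Gauss--Laguerre quadrature as an off-the-shelf numerical tool, not as something requiring justification within the manuscript.

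Your route therefore goes considerably beyond the paper's. What you gain is a self-contained derivation---the exactness argument via the Euclidean decomposition $p=qL_{\tilde{N}}+r$, the Christoffel--Darboux identity for the weight formula, and the density argument for convergence are all correct and would make the appendix independent of the handbook citation. What the paper's approach buys is brevity: since the result is textbook material, a one-line citation suffices for the intended audience and keeps the focus on the FAS-specific manipulations. If you wish to match the paper's treatment, you can simply cite the result; if you wish to include your derivation, it is correct as sketched, though you should double-check the normalization convention for $L_n$ (the paper implicitly uses the convention where $\int_0^\infty e^{-x}L_n^2(x)\,dx=1$, consistent with Abramowitz--Stegun) when pinning down the constant in item~(ii).
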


Hence, by applying Lemma \ref{lemma1} into \eqref{eq-app2}, \eqref{eq-asc-gen} is derived and the proof is completed.  

\section{The Proof of Proposition \ref{pro-sop} } \label{app-sop}
By substituting the definition of secrecy capacity from \eqref{eq-def-sc} into \eqref{eq-def-sop}, the SOP can re-expressed as
\begin{align}
	&P_\mathrm{sop}=\Pr\left(\log_2\left(\frac{1+\gamma_\mathrm{B}}{1+\gamma_\mathrm{E}}\right)\leq R_\mathrm{s}\right)\\
	&=\Pr\left(\gamma_\mathrm{B}\leq 2^{R_\mathrm{s}}\gamma_\mathrm{E}+2^{R_\mathrm{s}}-1\right)\\
	&=\int_0^\infty F_{\gamma_\mathrm{B}}\left(\gamma_\mathrm{o}\right)f_{\gamma_\mathrm{E}}\left(\gamma_\mathrm{E}\right)d\gamma_\mathrm{E}\\ \notag
	&\overset{(b)}{=}\prod_{k=1}^{K}\int_0^\infty \frac{f_{g_{k,\mathrm{E}}}\left(\frac{\gamma_{{\mathrm{E}}}}{\bar{\gamma}_\mathrm{E}}\right)}{\bar{\gamma}_\mathrm{E}^K}c\left(F_{g_{1,\mathrm{E}}}\left(\frac{\gamma_{{\mathrm{E}}}}{\bar{\gamma}_\mathrm{E}}\right),\dots, F_{g_{k,\mathrm{E}}}\left(\frac{\gamma_{{\mathrm{E}}}}{\bar{\gamma}_\mathrm{E}}\right)\right)
	 \\
	&\times C\left(F_{g_{1,\mathrm{B}}}\left(\frac{R_\mathrm{o}\gamma_\mathrm{E}+R_\mathrm{t}}{\bar{\gamma}_\mathrm{B}}\right),\dots, F_{g_{k,\mathrm{B}}}\left(\frac{R_\mathrm{o}\gamma_\mathrm{E}+R_\mathrm{t}}{\bar{\gamma}_\mathrm{B}}\right)\right)d\gamma_{{\mathrm{E}}},\label{eq-sop2}
\end{align}
where $\gamma_\mathrm{o}=R_\mathrm{o}\gamma_\mathrm{E}+R_\mathrm{t}$, $R_\mathrm{o}=2^{R_{\mathrm{s}}}$, and $R_\mathrm{t}=R_\mathrm{o}-1$. Besides, $(b)$ is obtained by transformation of random variables and Theorem \ref{thm-dist-gen}. Then, by applying the Gauss-Laguerre quadrature from Lemma \ref{lemma1} into \eqref{eq-sop2}, the proof is accomplished. 
\section{Construction of the Gaussian Copula } \label{app-g}
Let $\mathbf{s}=\left[S_1,\dots,S_d\right]$ be a vector of $d$ random variables with marginal CDFs $F_{S_q}\left(s_q\right)$ for $q\in\left\{1,\dots,d\right\}$ respectively. Then, by transforming them to uniform variables $U_q$, we have
\begin{align}
	U_q=F_{S_q}\left(s_q\right).
\end{align}
Next, we transform these uniform variables to standard normal variables by using the inverse CDF of the standard normal distribution, i.e.,
\begin{align}
	V_q=\varphi^{-1}\left(U_q\right).
\end{align}
Further, we assume that the transformed variables $V_q$ follow a multivariate normal distribution with correlation matrix $\mathbf{R}$. Hence, we have
$\mathbf{v}=\left[V_1,\dots,V_d\right]\sim\mathcal{N}\left(\mathbf{0},\mathbf{R}\right)$ such that the joint CDF of $\mathbf{v}$ is given by
\begin{align}\label{eq-app-jcdf}
	F_{\mathbf{v}}\left(v\right)=\Phi_{\mathbf{R}}\left(v_1,\dots,v_d\right).
\end{align}
Now, using the copula definition from \eqref{eq-cop-def}, the Gaussian copula $C_\mathrm{G}$, which is the joint CDF of $\mathbf{v}$ evaluated at $\phi^{-1}\left(u_q\right)$, can be derived as 
\begin{align}\notag
	&C_\mathrm{G}\left(u_1,\dots,u_d\right)\\
	&=\Pr\left(V_1\leq\varphi^{-1}\left(u_1\right),\dots,V_d\leq\varphi^{-1}\left(u_d\right);\vartheta_\mathrm{G}\right).
\end{align}
Eventually, using the multivariate normal CDF from \eqref{eq-app-jcdf} and Sklar's theorem, we have \eqref{eq-gc-def1} and the construction of $C_\mathrm{G}$ is completed.

As for the construction of $c_\mathrm{G}$, we can let $\boldsymbol{\varphi}^{-1}=\left[\varphi^{-1}\left(u_1\right),\dots,\varphi^{-1}\left(u_d\right)\right]^T$, and then by using the chain rule, $c_\mathrm{G}$ can be written as 
	\begin{align}
	&c_\mathrm{G}\left(u_1,\dots,u_d;\vartheta_\mathrm{G}\right)\notag\\
	&=\frac{\partial^d C_\mathrm{G}\left(u_1,\dots,u_d;\vartheta_\mathrm{G}\right)}{\partial u_1\dots\partial u_d}\\
	&=\frac{\phi_\mathbf{R}\left(\varphi^{-1}(u_1),\dots,\varphi^{-1}(u_d);\vartheta_\mathrm{G}\right)}{\phi\left(\varphi^{-1}\left(u_1\right)\right)\dots \phi\left(\varphi^{-1}\left(u_1\right)\right)}\\
	&=\left(2\pi\right)^{-\frac{d}{2}}\det\left(\mathbf{R}\right)^{-\frac{1}{2}}\frac{\exp\left(-\frac{1}{2}\left(\boldsymbol{\varphi}^{-1}\right)^T\mathbf{R}^{-1}\boldsymbol{\varphi}^{-1}\right)}{\prod_{q=1}^d\left(2\pi\right)^{-\frac{1}{2}}\exp\left(-\frac{\boldsymbol{\varphi}^{-1}}{2}\right)}, \label{eq-app-cg}
	\end{align}
in which $\phi_\mathrm{R}$ and $\phi$ represent the joint PDF of multivariate normal distribution and the PDF of the univariate normal distribution, respectively. After performing  simplifications in \eqref{eq-app-cg}, the construction of $c_\mathrm{G}$ is accomplished.
\bibliographystyle{IEEEtran}


\end{document}